 \useunder{\uline}{\ul}{}
\begin{document}

\def\BWT{$BWT$}
\def\RLE{$RLE$}

\def\cbwt(#1){\texttt{c-bwt}(#1)}
\def\cbwt{\texttt{cbwt}}

\def\bwt(#1){\texttt{bwt}(#1)}
\def\bwt{\texttt{bwt}}

\begin{frontmatter}

\title{String Attractors and Combinatorics on Words}



\newcommand{\repeatthanks}{\textsuperscript{\thefootnote}}

\author{Sabrina Mantaci\inst{1} \and Antonio Restivo\inst{1} \and
Giuseppe Romana\inst{1} \and
Giovanna Rosone\inst{2}\Envelope
\and Marinella Sciortino\inst{1}\Envelope
}
\authorrunning{S. Mantaci, A. Restivo, G. Romana, G. Rosone and M. Sciortino} 

\institute{University of Palermo, Italy,\\
\email{sabrina.mantaci@unipa.it, antonio.restivo@unipa.it, giuseppe.romana01@community.unipa.it, marinella.sciortino@unipa.it}\\
\and
University of Pisa, Italy,\\
\email{giovanna.rosone@unipi.it}}



\maketitle   

\begin{abstract}
The notion of \emph{string attractor} has recently been introduced in [Prezza, 2017] and studied in [Kempa and Prezza, 2018] to provide a unifying framework for known dictionary-based compressors. A string attractor for a word $w=w[1]w[2]\cdots w[n]$ is a subset $\Gamma$ of the positions $\{1,\ldots,n\}$, such that all distinct factors of $w$ have an occurrence crossing at least one of the elements of $\Gamma$. While finding the smallest string attractor for a word is a NP-complete problem, it has been proved in [Kempa and Prezza, 2018] that dictionary compressors can be interpreted as algorithms approximating the smallest string attractor for a given word.

In this paper we explore the notion of string attractor from a combinatorial point of view, by focusing on several families of finite words. The results presented in the paper suggest that the notion of string attractor can be used to define new tools to investigate combinatorial properties of the words.

\keywords{String Attractor, Burrows-Wheeler transform, Lempel-Ziv encoding, run-length encoding, Thue-Morse word, De Brujin word}
\end{abstract}
\end{frontmatter}%
\section{Introduction}
The notion of \emph{String Attractor} has been recently introduced and studied in \cite{prezzaArXiv2017,KempaP18} to find a common principle underlying the main techniques constituting the fields of dictionary-based compression. 
It is a subset of the text's positions such that all distinct factors have an occurrence crossing at least one of the string attractor's elements. 
From one hand the problem of finding the smallest string attractor of a word has been proved to be NP-complete, on the other hand  dictionary compressors can be interpreted as algorithms approximating the smallest string attractor for a given word \cite{KempaP18}. Moreover, approssimation rates with respect to the smallest string attractor can be derived for most known compressors. In particular compressors based on the Burrows-Wheeler Transform and the dictionary-based compressors are considered. 

The Burrows-Wheeler Transform (BWT) is a reversible transformation that was introduced in 1994 in the field of Data Compression and it also largely used for self-indexing data structures. It has several combinatorial properties that make it a versatile tool in several contexts and applications \cite{PrezzaPSR19,LouzaTGZ18,MaReSc,RestivoRosoneTCS2011,MantaciRRSV17}.

Dictionary-based compressors are mainly based on a technique originated in two theoretical papers of Ziv and Lempel \cite{Ziv77auniversal,Ziv78}. Such compressors, that are able to combine compression power and
compression/decompression speed, are based on a paper in which combinatorial properties of word factorization are explored \cite{LZ1976}.

In this paper we explore the notion of string attractor from a combinatorial point of view. In particular, we compute the size of a smallest string attractor for infinite families of words that are well known in the field of Combinatorics on Words: standard Sturmian words (and some their extension to bigger alphabets), Thue-Morse words and de Brujin words. In particular, we show that the size of the smallest string attractor for standard Sturmian words is $2$ and it contains two consecutive positions. For the de Brujin words the size of the smallest string attractor grows asintotically as $\frac{n}{\log n}$, where $n$ is the length of the word. We show a string attractor of size $\log n$ for Thue-Morse words and we conjecture that this size is minimum.  From the results presented in the paper, we believe that the distribution of the position in the smallest string attractor of a word, in addition to its size, can provide some interesting information about the combinatorial properties of the word itself. For this reason the notion of string attractor can provide hints for defining new methods and measures to investigate the combinatorial complexity of the words.

\section{Preliminaries}
Let $\Sigma =\{a_1, a_2, \ldots, a_\sigma\}$ be a finite ordered alphabet with $a_1< a_2< \ldots < a_\sigma$, where $<$ denotes the standard lexicographic order. We denote by $\Sigma^*$ the set of words over $\Sigma$.
Given a finite word $w = w_1w_2 \cdots w_n \in \Sigma^*$ with each $w_i \in \Sigma$, the length of $w$, denoted $|w|$, is equal to $n$.

Given a finite word $w=w_1w_2\cdots w_{n}$ with each $w_i \in \Sigma$, a \emph{factor} of a word $w$ is written as $w[i,j] = w_i \cdots w_j$ with $1\leq i \leq j \leq n$. A factor of type $w[1,j]$ is called a \emph{prefix}, while a factor of type $w[i,n]$ is called a \emph{suffix}.
We also denote by $w[i]$ the $i$-th letter in $w$ for any $1\leq i \leq n$.


We denote by $\tilde{w}$ the reversal of $w$, given by $\tilde{w} = w_n \cdots w_2w_1$.
If $w$ is a word that has the property of reading the same in either direction,
i.e. if $w =\tilde{w}$, then $w$ is called a \emph{palindrome}.


We say that two words $x,y\in \Sigma^*$ are {\em conjugate}, if $x=uv$ and $y=vu$, where $u,v\in \Sigma^*$.
Conjugacy between words is an equivalence relation over $\Sigma^*$.


Given a finite word $w$, $w^k$ denotes the word obtained by concatenating $k$ copies of $w$.  A nonempty word $w \in \Sigma^+$ is {\em primitive} if $w=u^h$ implies $w=u$ and $h=1$.
A word $x$ is {\em periodic} if there exists a positive integer $p$ such that $x[i]=x[j]$ if $i=j\mod p$. The integer $p$ is called \emph{period} of $x$.

A {\em Lyndon word} is a primitive word which is the minimum in its conjugacy class, with respect to the lexicographic order relation. We call \emph{Lyndon conjugate} of a primitive word $w$ the conjugate of $w$ that is a Lyndon word.




The Burrows-Wheeler Transform is a permutation $bwt(v)$ of the symbols in $v$, obtained as the concatenation of the last symbol of each conjugate in the list of the lexicographically sorted conjugates of $v$.


The LZ factorization of a word $w$ is its factorization $s = p_1 \cdots p_z$ built left to right in a greedy way by the following rule: each new factor (also called an LZ77 phrase) $p_i$ is either the leftmost occurrence of a letter in $w$ or the longest prefix of $p_i \cdots p_z$ which occurs, as a factor, in $p_1 \cdots p_{i-1}$.


\section{String Attractor of a word}

In this section we describe the notion of string attractor that is a combinatorial object introduced in 
\cite{prezzaArXiv2017,KempaPrezzaArXiv2017} to obtain a unifying framework for dictionary compressors. 


\begin{definition} 
A {\em string attractor} of a word $w \in \Sigma^n$ is a set of $\gamma$ positions $\Gamma = \{j_1, \ldots, j_\gamma \}$ such
that every factor $w[i, j]$ has an occurrence $w[i',j'] = w[i,j]$ with $j_k \in [i', j']$, for some $j_k \in \Gamma$.
\end{definition}

Simply put, a string attractor for a word $w$ is a set of positions in $w$ such that all distinct factors of $w$ have an occurrence crossing at least one of the attractor’s elements. 

Note that, trivially, any set that contains a string attractor for $w$, is a string attractor for $w$ as well. Note also that a word can have different string attractors that are not included into each other. We are interested in finding a {\em smallest string attractor}, i.e. a string attractor with a minimum number of elements. We denote by $\gamma^*(w)$ the size of the smallest string attractor for $w$. Note that all the factors made of a single letters should be covered, and therefore $\gamma^*(w)\geq |\Sigma|$.

\begin{example}
Let $w=adcbaadcbadc$ be a word on the alphabet $\Sigma=\{a,b,c,d\}$. A string attractor for $w$ is for instance $\Gamma=\{1,4,6,8,11\}$. Note that, in order to have a string attractor,  position $1$ can be removed from $\Gamma$, since all the factors that cross position $1$ have a different occurrence that crosses a different position in $\Gamma$. Therefore $\Gamma'=\{4,6,8,11\}$ is also a string attractor for $w$ with a smaller number of elements. The positions of $\Gamma'$ are underlined in $$w=adc\underline{b}a\underline{a}d\underline{c}ba\underline{d}c.$$ $\Gamma'$ is also a smallest string attractor since $|\Gamma|=|\Sigma|$. Then  $\gamma^*(w)=4$.  
Remark that the sets $\{3,4,5,11\}$ and $\{3,4,6,7,11\}$ are also string attractors for $w$. It is easy to verify that the set $\Delta=\{1,2,3,4\}$ is not a string attractor since the factor $aa$ does not intersect any position in $\Delta$.
\label{exp:gamma}
\end{example}

The following two propositions, proved in \cite{prezzaArXiv2017}, are useful to derive a lower bound on the value of $\gamma^*$.

\begin{proposition}
Let $\Gamma$ be a string attractor for the word $w$. Then, $w$ contains at most $|\Gamma|k$ distinct factors of length $k$, for every $1\leq k \leq |w|$.\label{prop:kfactors}
\end{proposition}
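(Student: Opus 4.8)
The statement to prove is Proposition~\ref{prop:kfactors}: if $\Gamma$ is a string attractor for $w$, then $w$ has at most $|\Gamma|k$ distinct factors of length $k$, for each $1 \le k \le |w|$.

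\medskip

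The plan is to count distinct length-$k$ factors by charging each one to a position of $\Gamma$ together with an offset. By definition, any factor $u$ of $w$ with $|u| = k$ has an occurrence $w[i',j']$ with $j' - i' + 1 = k$ that contains some attractor position $j_\ell \in \Gamma$, i.e. $i' \le j_\ell \le j'$. Associate to $u$ the pair $(j_\ell, j_\ell - i')$: the attractor element crossed by this chosen occurrence, and the distance from the start of the occurrence to that element. The key observation is that this pair determines $u$ uniquely: knowing $j_\ell$ and the offset $d = j_\ell - i'$ recovers $i' = j_\ell - d$ and $j' = i' + k - 1$, hence the factor $w[i',j'] = u$ itself. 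So the map $u \mapsto (j_\ell, d)$ from the set of distinct length-$k$ factors into $\Gamma \times \{0, 1, \ldots, k-1\}$ is injective.

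\medskip

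First I would fix $k$ and, for each distinct factor $u$ of length $k$, select one occurrence $w[i',j']$ guaranteed by the string attractor property, crossing some $j_\ell \in \Gamma$; when several attractor positions lie in the occurrence, pick any one, say the smallest. Next I would define the offset $d = j_\ell - i'$ and note $0 \le d \le k-1$ since $i' \le j_\ell \le j' = i' + k - 1$. Then I would argue injectivity as above: if two distinct length-$k$ factors $u \neq u'$ mapped to the same pair $(j_\ell, d)$, their chosen occurrences would start at the same position $i' = j_\ell - d$ and have the same length $k$, so they would be the same factor of $w$, contradicting $u \neq u'$. Finally, $|\Gamma \times \{0,\ldots,k-1\}| = |\Gamma|\,k$ bounds the number of distinct length-$k$ factors from above, which is exactly the claim.

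\medskip

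There is essentially no serious obstacle here; the argument is a clean injection. The one point requiring a little care is the edge behavior of the offset and making sure the pair genuinely reconstructs the factor — in particular that $j' = i' + k - 1$ is forced by fixing the length $k$, so that $(j_\ell, d)$ plus the known value of $k$ pins down both endpoints. Once that is spelled out, the counting is immediate.
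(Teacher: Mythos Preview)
Your argument is correct: the injection $u \mapsto (j_\ell, d)$ with $d = j_\ell - i' \in \{0,\dots,k-1\}$ is well defined and injective, and the bound $|\Gamma|\,k$ follows immediately. Note, however, that the paper does not give its own proof of this proposition; it merely quotes the statement from \cite{prezzaArXiv2017}. Your proof is precisely the standard one found there, so there is nothing to contrast.
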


\begin{proposition}\label{prop:repeatedfactor}
Let $w\in \Sigma^*$ and let $r$ be the length of its longest repeated factor. Then it holds $\gamma^*(w)\geq \frac{|w|-r}{r+1}$.
\end{proposition}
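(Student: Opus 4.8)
The plan is to reduce the statement directly to Proposition~\ref{prop:kfactors} by choosing the right factor length. First I would fix a string attractor $\Gamma$ for $w$ of minimum size, so that $|\Gamma| = \gamma^*(w)$, and set $n = |w|$ and $k = r+1$. Since $r$ is the length of the longest \emph{repeated} factor of $w$, no factor of $w$ of length $r+1$ can occur twice (such an occurrence would exhibit a repeated factor of length $r+1 > r$). Hence every factor of length $r+1$ occurring in $w$ occurs exactly once.

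Next I would count these factors. The factors of length $r+1$ of $w$ are exactly the windows $w[i,\,i+r]$ for $1 \le i \le n-r$, and by the previous paragraph these $n-r$ windows are pairwise distinct. Therefore $w$ contains precisely $n-r$ distinct factors of length $r+1$. (One should note here that the longest repeated factor is never the whole word, so $r \le n-1$ and $k = r+1 \le n$, which is exactly the range in which Proposition~\ref{prop:kfactors} applies; the degenerate case $r=0$, i.e. all letters of $w$ distinct, is handled the same way and simply yields $\gamma^*(w) \ge n$.)

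Finally I would invoke Proposition~\ref{prop:kfactors} with this value of $k$: the number of distinct factors of length $r+1$ is at most $|\Gamma|\,(r+1)$. Combining the two counts gives $n - r \le |\Gamma|\,(r+1) = \gamma^*(w)\,(r+1)$, and dividing by $r+1$ yields $\gamma^*(w) \ge \frac{|w|-r}{r+1}$, as claimed.

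I do not expect a genuine obstacle here: the only points requiring a little care are the observation that length-$(r+1)$ factors are necessarily unrepeated, the exact count $n-r$ of such windows, and the boundary check $r+1 \le |w|$ needed to apply Proposition~\ref{prop:kfactors}. Everything else is a one-line combination of inequalities.
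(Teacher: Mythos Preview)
Your argument is correct and is exactly the standard derivation: factors of length $r+1$ are pairwise distinct, there are $|w|-r$ of them, and Proposition~\ref{prop:kfactors} with $k=r+1$ gives $|w|-r\le \gamma^*(w)(r+1)$. Note that the paper does not actually supply its own proof of this proposition; it simply attributes both Proposition~\ref{prop:kfactors} and Proposition~\ref{prop:repeatedfactor} to \cite{prezzaArXiv2017}, and your one-line reduction of the latter to the former is precisely the intended argument.
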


The following proposition gives an upper bound for $\gamma^*$ of a concatenation of words, when $\gamma^*$ of the single words are known. 

\begin{proposition}\label{prop:concatenate}
Let $u$ and $v$ two words, then $\gamma^*(uv)\leq \gamma^*(u)+\gamma^*(v)+1$.
\end{proposition}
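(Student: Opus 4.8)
The plan is to take a smallest string attractor $\Gamma_u$ for $u$ and a smallest string attractor $\Gamma_v$ for $v$, and build a string attractor for $uv$ by placing the positions of $\Gamma_u$ at their natural locations in the prefix $u$, placing the positions of $\Gamma_v$ shifted by $|u|$ into the suffix $v$, and adding one extra position to handle the factors that straddle the boundary between $u$ and $v$. Concretely, set $\Gamma = \Gamma_u \cup \{|u|\} \cup \{|u| + j : j \in \Gamma_v\}$, which has size at most $\gamma^*(u) + 1 + \gamma^*(v)$; it then suffices to show $\Gamma$ is a string attractor for $w = uv$.

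First I would classify every nonempty factor $w[i,j]$ of $w$ into three cases according to how it sits relative to the cut point $|u|$. Case 1: the factor lies entirely within the prefix, i.e. $j \le |u|$; then it is a factor of $u$, so by definition of $\Gamma_u$ it has an occurrence inside $u$ crossing some element of $\Gamma_u \subseteq \Gamma$, and that same occurrence works in $w$. Case 2: the factor lies entirely within the suffix, i.e. $i > |u|$; then it is a factor of $v$, so it has an occurrence inside $v$ crossing some $j' \in \Gamma_v$, and the corresponding occurrence in $w$ (shifted by $|u|$) crosses $|u| + j' \in \Gamma$. Case 3: the factor straddles the boundary, i.e. $i \le |u| < j$; then this very occurrence $w[i,j]$ already contains position $|u|$, which is in $\Gamma$, so we are done without needing a different occurrence.

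The verification in Cases 1 and 2 is essentially bookkeeping about how occurrences translate between $u$, $v$ and $uv$; the only mildly delicate point is to make sure that an occurrence of a factor of $u$ guaranteed by $\Gamma_u$ is genuinely an occurrence in $w$ at the same positions, which is immediate since $u$ is a prefix of $w$, and symmetrically for $v$ being a suffix of $w$. I do not expect a serious obstacle here — the main "idea" is simply recognizing that a single boundary position $|u|$ absorbs all the cross-boundary factors in one shot, which is exactly why the bound has the additive $+1$ and not something larger. One should also note the edge cases where $u$ or $v$ is empty (then the statement is trivial) and that the definition of string attractor only constrains nonempty factors, so no separate argument for the empty factor is needed.
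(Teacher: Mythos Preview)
Your argument is correct and is the natural construction one expects: shift the two minimal attractors into place and pin the boundary with a single extra position. The three-case analysis is sound; in particular, the key observation that any factor straddling the cut already contains position $|u|$ is exactly what yields the ``$+1$'' and nothing more.

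As for comparison with the paper: the paper states this proposition without proof (it only follows it with an example showing the bound is tight), so there is nothing to compare against. Your write-up is a complete and standard proof of the claim. The only cosmetic remark is that your case split is phrased over occurrences $w[i,j]$ rather than over distinct factors, but since you then exhibit, for each case, an occurrence of the same string crossing $\Gamma$, this covers the attractor condition as stated.
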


\begin{example}
The bound defined in the previous proposition is tight. In fact, let $u=\underline{b}aa\underline{a}ba$ and $v=c\underline{d}cc\underline{c}d$ be two words in which the positions of the smallest string attractors are underlined. If we consider $uv=\underline{b}aa\underline{a}ba\underline{c}\underline{d}cc\underline{c}d$, the underlined positions represent one of the smallest string attractors for $uv$, as one can verify.
\end{example}

The following proposition gives an upper and lower bound for $\gamma^*$, when a power of a given word is considered.

\begin{proposition}\label{p-power}
Let $w=u^n$. Then $\gamma^*(u)\leq \gamma^*(u^n)\leq \gamma^*(u)+1$. 

\end{proposition}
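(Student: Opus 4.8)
The plan is to prove the two inequalities separately; the upper bound is the routine direction, and the lower bound is where the real work lies.

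For the upper bound $\gamma^*(u^n)\le\gamma^*(u)+1$, I would set $m=|u|$, fix a smallest string attractor $\Gamma$ of $u$ regarded as a set of positions inside the first copy of $u$ in $u^n$, and show that $\Gamma\cup\{m\}$ is a string attractor for $u^n$. The key point is the $m$-periodicity of $u^n$: every occurrence of a factor $f$ of $u^n$ can be translated by multiples of $m$, so $f$ has an occurrence $u^n[p,q]$ with $1\le p\le m$; if $q\ge m$ this occurrence already contains the position $m$, and if $q<m$ then $f=u[p,q]$ is a factor of $u$ and is covered inside the first copy by $\Gamma$. Note that iterating Proposition~\ref{prop:concatenate} would only yield $\gamma^*(u^n)\le n\gamma^*(u)+n-1$, so this periodicity argument really is needed.

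For the lower bound $\gamma^*(u)\le\gamma^*(u^n)$, I would take a smallest string attractor $\Gamma$ of $u^n$ and try to fold it onto one copy, that is, pass to $\widehat\Gamma=\{\,((j-1)\bmod m)+1:\ j\in\Gamma\,\}$, which has at most $|\Gamma|$ elements. A factor $f$ of $u$ has length at most $m$, and if its witnessing ($\Gamma$-crossing) occurrence in $u^n$ sits inside a single copy, then translating it down to $u$ gives an occurrence crossing the matching element of $\widehat\Gamma$; those factors are handled immediately.

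The hard part will be a factor $f$ of $u$ whose only $\Gamma$-crossing occurrences in $u^n$ straddle a boundary between two consecutive copies: folding such an occurrence yields only a \emph{cyclic} occurrence of $f$ inside $u$, so $\widehat\Gamma$ need not be a string attractor for $u$ as it stands. To get around this I would use the $m$-periodicity of $u^n$ once more to relocate a (necessarily short) straddling occurrence of $f$ into a single copy while keeping an attractor element that covers it --- equivalently, to argue that $\Gamma$ may be replaced by another smallest string attractor of $u^n$ whose fold does cover every factor of $u$. Making this relocation precise is, I expect, the main obstacle; once it is done, $\widehat\Gamma$ (after this adjustment) is a string attractor for $u$ of size at most $\gamma^*(u^n)$, and combining it with the upper bound gives the proposition.
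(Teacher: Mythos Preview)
The paper states Proposition~\ref{p-power} without proof, so there is no argument in the text to compare against; what follows concerns the soundness of your outline on its own.

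Your proof of the upper bound $\gamma^*(u^n)\le\gamma^*(u)+1$ is correct and complete: the $m$-periodicity shift plus the extra position~$m$ covers every factor, exactly as you describe.

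The lower bound is where your proposal remains a genuine sketch rather than a proof. You correctly identify that the na\"{\i}ve fold $\widehat\Gamma=\{((j-1)\bmod m)+1:j\in\Gamma\}$ can fail, and this failure is not a corner case that periodicity alone repairs. Concretely, take $u=aabba$ (so $m=5$) and $u^2=aabbaaabba$. One checks that $\Gamma=\{5,8\}$ is a (smallest) string attractor for $u^2$; its fold is $\widehat\Gamma=\{3,5\}$. But the factor $aa$ of $u$ occurs only as $u[1,2]$, which meets neither $3$ nor~$5$, so $\widehat\Gamma$ is \emph{not} a string attractor for $u$. The only $\Gamma$-crossing occurrence of $aa$ in $u^2$ is the straddling one $u^2[5,6]$, and no shift by a multiple of~$m$ turns it into a non-straddling occurrence while keeping an attractor element inside it---periodicity moves the occurrence and the attractor position together, so you gain nothing. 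Hence the sentence ``use the $m$-periodicity of $u^n$ once more to relocate a straddling occurrence into a single copy while keeping an attractor element that covers it'' does not describe an argument that works.

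What one actually needs for this route is to show that \emph{some} smallest attractor of $u^n$ has a fold that works (in the example, $\{4,7\}$ does, folding to $\{2,4\}$), or to abandon folding and argue differently. Either way, the step you flag as ``the main obstacle'' is the whole content of the lower bound, and as written the proposal does not supply it.
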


\begin{example}
The upper bound given by Proposition \ref{p-power} is tight. In fact consider the word $u=abbaab$. It is easy to check that the only smallest string attractors for $u$ are $\Gamma_1^*=\{2,4\}$ and $\Gamma_2^*=\{3,5\}$. 
In order to find the smallest string attractor for $u^2=abbaababbaab$, we remark that neither $\Gamma_1$ nor $\Gamma_2$ (neither any string attractor obtained from them by moving some position from the first to the second occurrence of $u$) cover all the new factors that appears after the concatenation. 
In particular $aba$ is not covered by $\Gamma_1^{'}$, and $bab$ is not covered by $\Gamma_2'$. A way to get the smallest string attractor for $u^2$ is to add to $\Gamma_1$ or $\Gamma_2$, the position corresponding either to the end of the first occurrence of $u$ or the beginning of the second occurrence. For instance, $\Gamma^*=\{2,4,6\}$ is a smallest string attractor for $u^2$. 
\end{example}

\begin{example}
Remark that $\gamma^*(u^n)$ can be equal to $\gamma^*(u)$ although different point for the string attractor could be chosen. For instance, let $u=a\underline{b}\underline{a}b\underline{c}bc$ be a word whose smallest string attractor is $\{2,3,5\}$ (the underlined letters). Then $u^2= \underline{ab}abcb\underline{c}ababcbc$ has a string attractor $\{3,6,7\}$ of cardinality $3$. Remark that $\{2,3,5\}$ is not a string attractor for $u^2$.

\end{example}
A straightforward consequence of Proposition \ref{p-power} is the following:

\begin{corollary}\label{cor:conjAttractor}
If $u$ and $v$ are conjugate words, then $|\gamma^*(u)-\gamma^*(v)| \leq 1$.
\end{corollary}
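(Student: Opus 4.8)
The plan is to derive Corollary~\ref{cor:conjAttractor} from Proposition~\ref{p-power} by viewing one of the two conjugates as sitting, as exactly one full period, inside a suitable power of the other. Since conjugacy is a symmetric relation, it is enough to prove the one-sided inequality $\gamma^*(v)\le\gamma^*(u)+1$; running the same argument with $u$ and $v$ interchanged then yields $\gamma^*(u)\le\gamma^*(v)+1$, and the two bounds together are exactly $|\gamma^*(u)-\gamma^*(v)|\le 1$. Write $u=xy$ and $v=yx$ with $x,y\in\Sigma^*$, and put $n:=|u|=|v|$.

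First I would record the elementary facts $u^2=xyxy=x\,v\,y$ and, more robustly, $u^3=xyxyxy=x\,v^2\,y$, so that $v$ occurs inside $u^2$ and $v^2$ occurs inside $u^3$; in either case the ambient word $u^k$ is periodic of period $n$, and the distinguished occurrence of $v$ (resp.\ $v^2$) fills exactly one (resp.\ two) full period(s). By Proposition~\ref{p-power} applied to $u$, the word $u^2$ (resp.\ $u^3$) has a string attractor $\Gamma$ with $|\Gamma|\le\gamma^*(u)+1$. I would then build from $\Gamma$ a candidate attractor $\Gamma'$ for $v$ by projecting the positions of $\Gamma$ onto the positions $\{1,\dots,n\}$ of $v$ through the natural correspondence induced by the embedding, i.e.\ reducing indices modulo $n$ so as to ``fold'' $u^k$ back onto a single period; this gives $|\Gamma'|\le|\Gamma|\le\gamma^*(u)+1$. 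If one works instead with the occurrence of $v^2$ in $u^3$, one first produces a string attractor of $v^2$ of that size and then concludes with $\gamma^*(v)\le\gamma^*(v^2)$, which is again Proposition~\ref{p-power}.

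The main obstacle, and the only place where real work is needed, is proving that $\Gamma'$ is indeed a string attractor for $v$. A factor $f$ of $v$ is in particular a factor of the ambient power of $u$, hence has an occurrence there crossing some element of $\Gamma$; but that occurrence may fall outside the distinguished block, and, worse, $f$ may only ``occur cyclically'', i.e.\ be a factor of $v$ read around its endpoints rather than a genuine linear factor at the spot produced by the projection. The resolution is to exploit the periodicity of period $n$ of $u^k$: because the distinguished block comprises a full period (or two), a $\Gamma$-crossing occurrence of $f$ can be slid by a multiple of $n$ so as to meet that block, and one then has to verify, through a careful case analysis of how $f$ straddles the ``cut'' of $v$, that after folding $f$ still admits a linear occurrence in $v$ crossing a retained position of $\Gamma'$ — this is where one may have to spend the single extra position furnished by the ``$+1$'' of Proposition~\ref{p-power}, placing it at the junction of the two halves. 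Once this bookkeeping is carried out, everything else is immediate from the definition of a string attractor together with two applications of Proposition~\ref{p-power}.
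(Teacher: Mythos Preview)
The paper gives no proof beyond the sentence ``a straightforward consequence of Proposition~\ref{p-power}'', so there is no detailed argument to compare against; the intended reading is presumably that the \emph{proof} of Proposition~\ref{p-power} (not just its statement) applies verbatim when the base word $u$ is replaced by any length-$|u|$ block of $u^2$, i.e.\ by any conjugate $v$ of $u$, yielding $\gamma^*(v)\le\gamma^*(u^2)\le\gamma^*(u)+1$ in one stroke. Your plan is in exactly this spirit, but you try to redo that hidden step inline, and that is where the gap lies.

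Your ``folding'' step is not established, and your sketch does not close it. Concretely: take an attractor $\Gamma$ of $u^k$ and a factor $f$ of $v$; sliding a $\Gamma$-crossing occurrence of $f$ by a multiple of $n$ so that the crossed position lands in the $v$-block does \emph{not} force the whole occurrence into that block, so after projection you may be left with only a cyclic (wrap-around) occurrence in $v$, which is useless. Your remark about ``spending the $+1$ at the junction'' is confused: the $+1$ of Proposition~\ref{p-power} is already absorbed in $|\Gamma|$ and you have no control over where it sits; the projection map does not single out the junction $|y|$ of $v$ at all. The alternative route through $v^2\subset u^3$ has the identical defect one level up --- building an attractor of $v^2$ from one of $u^3$ is again the folding argument --- and then you appeal to $\gamma^*(v)\le\gamma^*(v^2)$, which is precisely the half of Proposition~\ref{p-power} whose proof you are effectively trying to reproduce. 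A cleaner write-up would simply observe that $v$ is a full period-block of $u^2$ and invoke (the proof of) both inequalities in Proposition~\ref{p-power}, rather than re-deriving the lower-bound mechanism without carrying out the case analysis.
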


\begin{example}
Consider the word $w=babbaaa$. Then a smallest string attractor for $w$ is $\{3,5\}$, i.e. $\gamma^*(w)=2$. Consider its conjugate $u=a\underline{b}ab\underline{b}a\underline{a}$. Its smallest string attractor is $\{2,4,6\}$, i.e. $\gamma^*(u)=3$. Note that the Lyndon word does not have necessarily the smallest $\gamma^*$ among the conjugates. In fact, for instance, the Lyndon conjugate of $w$ is $aaababb$, and it is easy to verify that one of the smallest string attractor is
$\{3,4,6\}$.
\end{example}

\section{Approximating a string attractor via compressors}


In \cite{KempaP18} the authors show that many of the most well-known compression schemes reducing the text’s size by exploiting its repetitiveness can induce string attractors whose sizes are bounded by the repetitiveness measures associated to such compressors. In particular, straight-line programs, Run-Length Burrows-Wheeler transform, macro schemes, collage systems, and the compact directed acyclic word
graph are considered. Here we report some result related to the Burrows-Wheeler Transform, collage systems, and Lempel-Ziv 77 (that is a particular macro-scheme) that provide upper bounds on the size of the smallest string attractor for a given word.  Such bounds will be used in next sections to compute the string attractors for known families of finite words.


The first theorem, proved in \cite{KempaP18}, states a connection between a string attractor of a word $w$ and the runs of equal letters in the $\bwt(w\$)$.
\begin{theorem}
\label{th:prezzaBWT}
Let $w$ be a word and let $r$ be the number of equal-letter runs in the $bwt(w\$)$, where $\$$ is a simbol different from the ones in the alphabet $\Sigma$ and assumed smaller than any symbol in $\Sigma$. 
Then, $w$ has a string attractor of size $r$.
\end{theorem}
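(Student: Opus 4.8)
The plan is to exhibit an explicit string attractor of size $r$ built from the equal-letter runs of $\bwt(w\$)$, using the standard machinery connecting the BWT to the suffix array. Let $n=|w\$|$ and let $SA[1..n]$ be the suffix array of $w\$$, so that $\bwt(w\$)[i]=(w\$)[SA[i]-1]$ (cyclically, with the convention that position $0$ wraps to $n$). The equal-letter runs of $\bwt(w\$)$ partition $\{1,\ldots,n\}$ into $r$ maximal intervals on which the BWT is constant; for each run pick its first index, say the run starting at BWT-position $p$, and define $\Gamma$ to be the set of text positions $\{\, SA[p]-1 \bmod n \;:\; p \text{ is the first index of a run}\,\}$ (identifying position $0$ with $n$, i.e.\ the position of $\$$; one checks this position can be dropped or absorbed, so $|\Gamma|\le r$). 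The claim is that $\Gamma$ is a string attractor for $w\$$, hence also essentially for $w$.

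First I would set up the key lemma: if $i$ and $i+1$ are two consecutive suffix-array indices lying in the \emph{same} equal-letter run of the BWT, then the suffixes $SA[i]$ and $SA[i+1]$, when each is extended one character to the left, remain lexicographically consecutive; that is, the suffixes starting at $SA[i]-1$ and $SA[i+1]-1$ are also adjacent in the suffix array (this is exactly the ``LF-mapping preserves order within a run'' property). Iterating this observation, any maximal BWT run $[a,b]$ has the property that, reading the corresponding text positions leftward in lockstep, the relative suffix order is preserved until the step at which the run is ``broken'' — and a run is broken only when we pass the chosen representative position of some run.

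Next I would translate this into the covering statement. Take any factor $u=w\$[i,j]$. Consider the set of suffix-array indices whose corresponding suffixes begin with $u$; this is a contiguous interval $[\ell,m]$ of $SA$. Walk leftward from each of these suffixes simultaneously, character by character, starting from the occurrence of $u$ and extending backward; by the lemma the interval stays contiguous as long as no run boundary representative is encountered inside it. Since $\$$ occurs only once, this backward walk cannot continue forever while the interval has size $\ge 1$ actually staying inside a single run — eventually (in fact within $|u|$ steps, walking from the right end of an occurrence back to its left end, or sooner) the interval must contain, at some BWT-position, the first index of a run; equivalently, some occurrence of $u$ in $w\$$ has, at one of its positions, the text position that was selected into $\Gamma$. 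Spelling out the index bookkeeping — that "the interval contains the first index of a run at step $t$" means precisely that the occurrence of $u$ located there crosses the corresponding $\Gamma$-position — gives that $u$ has an occurrence crossing an element of $\Gamma$.

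The main obstacle I anticipate is the second step: making the ``backward walk on a run'' argument fully rigorous, in particular showing that every factor's occurrence interval is forced to hit a run-start within the span of the factor (so that the crossing occurrence is genuinely an occurrence of the whole factor $u$, not of a proper extension), and handling the boundary/wraparound conventions around the sentinel $\$$ cleanly so that the position $0\equiv n$ of $\$$ does not have to be counted, yielding $|\Gamma|\le r$ rather than $r+1$. Once the LF-order-preservation lemma is in place, the rest is careful interval bookkeeping on the suffix array.
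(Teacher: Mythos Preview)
Your proposal is correct and matches the construction the paper attributes to \cite{KempaP18}: the paper does not give its own proof of this theorem but explicitly states that the attractor is built by taking, for each equal-letter run of $\bwt(w\$)$, the text position corresponding to the first (equivalently, last) symbol of that run, which is exactly your $\Gamma$. Your backward-walk argument via the order-preserving LF map within a run is the standard Kempa--Prezza argument, so there is nothing to add.
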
	

In particular, in the proof of Theorem \ref{th:prezzaBWT} the string attractor is constructed by considering the position of the symbols in $w$ that correspond, in the output of the transformation, to the first occurrence of a symbol in each run (or, equivalently, the last occurrence of a symbol in  each run).

The following result, proved in \cite{KempaP18}, states the relationship between a string attractor of a word $w$ and the number of phrases in the LZ  parsing of $w$.
\begin{theorem}\label{L_KePrLZ}
Given a word $w$, there exist a string attractor of $w$ of size equal to the number of phrases of its LZ parsing. 
\end{theorem}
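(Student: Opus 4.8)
The plan is to show directly that the set $\Gamma$ of positions consisting of the last position of each LZ phrase is a string attractor for $w$. Let $w = p_1 p_2 \cdots p_z$ be the LZ factorization and write $e_1 < e_2 < \cdots < e_z$ for the ending positions of the phrases in $w$, so $e_k = |p_1 \cdots p_k|$; set $\Gamma = \{e_1, \ldots, e_z\}$, which has size $z$. I want to argue that every factor of $w$ has an occurrence that touches some $e_k$.

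First I would reduce to the case of a factor $v = w[i,j]$ that lies \emph{strictly inside} a single phrase, i.e. is disjoint from $\Gamma$: if $v$ already crosses some $e_k$ we are done, so suppose $v$ is contained in $p_m$ and does not contain the last position $e_m$ of $p_m$; in particular $v$ is contained in the prefix of $p_m$ of length $|p_m|-1$. The second step is the crucial one and uses the defining property of the greedy LZ parsing: the phrase $p_m$ (when it is not a single leftmost-occurrence letter) is defined as the longest prefix of $p_m \cdots p_z$ that already occurs as a factor of $p_1 \cdots p_{m-1}$; hence in particular the \emph{proper} prefix $p_m'$ of $p_m$ of length $|p_m|-1$ occurs entirely within $p_1 \cdots p_{m-1}$, i.e. it has an occurrence ending at or before position $e_{m-1}$. (If $p_m$ is a single new letter then $|p_m|-1 = 0$ and there is nothing inside it to cover.) Translate the occurrence of $v$ inside $p_m'$ to this earlier occurrence of $p_m'$: we obtain an occurrence of $v$ entirely inside $w[1, e_{m-1}] = p_1 \cdots p_{m-1}$.

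Now iterate: this new occurrence of $v$ either crosses one of $e_1, \ldots, e_{m-1}$ — in which case we are done — or it is again strictly contained in a single phrase $p_{m'}$ with $m' < m$, and we repeat the argument, each time decreasing the index of the containing phrase by at least one. Since the phrase index is a strictly decreasing positive integer, after finitely many steps we either land on an occurrence that crosses some $e_k \in \Gamma$, or we reach $p_1$; but $p_1$ is a single leftmost-occurrence letter of length $1$, so a factor strictly inside it would be empty, a contradiction unless $v$ already crossed $\Gamma$. Hence every nonempty factor of $w$ has an occurrence crossing $\Gamma$, so $\Gamma$ is a string attractor of size $z$, as claimed.

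The main obstacle is the second step: one must be careful about the exact greedy property invoked, namely that the length-$(|p_m|-1)$ \emph{proper prefix} of $p_m$ (not $p_m$ itself) has a previous occurrence contained in $p_1 \cdots p_{m-1}$, and to handle the boundary case of phrases that are single fresh letters; once that is set up cleanly, the translation-and-induction argument is routine. (Alternatively, this theorem follows from Theorem \ref{th:prezzaBWT} together with known relations between the number of BWT runs and the size of the LZ parsing, but the direct combinatorial argument above is simpler and self-contained.)
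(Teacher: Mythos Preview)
Your argument is correct and is precisely the standard one: take $\Gamma$ to be the set of phrase-ending positions and push any uncovered occurrence of a factor leftward via the source of its containing phrase, decreasing the phrase index until a boundary is hit. The paper does not actually give its own proof of this theorem; it cites \cite{KempaP18} and simply remarks that ``a string attractor can be constructed by considering the set of positions at the end of each phrase,'' which is exactly your construction.

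Two small remarks. First, under the LZ definition used in this paper (each non-fresh phrase $p_m$ is itself the longest prefix of the remaining suffix that already occurs in $p_1\cdots p_{m-1}$, with no extra trailing character), the \emph{entire} phrase $p_m$ has an earlier occurrence ending at or before $e_{m-1}$, not just its length-$(|p_m|-1)$ proper prefix; your weaker claim still suffices, but you are being more careful than the definition requires, possibly because you had the ``phrase plus one new character'' variant of LZ77 in mind. Second, your parenthetical alternative via Theorem~\ref{th:prezzaBWT} does not work: that theorem produces an attractor of size $r$ (the number of BWT runs), and the known inequalities between $r$ and $z$ go the wrong way (and are not equalities) to conclude the existence of an attractor of size exactly $z$; you should drop that sentence.
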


By using the previous result, a string attractor can be constructed by considering the set of  positions at the end of each phrase. 

The following theorem in \cite{KempaP18} gives a connection between a particular class of grammars, called \emph{collage systems} \cite{KIDA2003253}, and string attractors.

\begin{definition}
A collage system is a set of $c$ rules of four possible types:
\begin{itemize}
    \item $X\to a$: nonterminal $X$
    expands to a terminal $a$
\item $X\to AB$: nonterminal $X$ expands to $AB$, with $A$ and $B$ nonterminals different from $X$
\item $X \to R^{\ell}$: nonterminal $X$ expands to nonterminal $R \neq X$
repeated $\ell$ times
\item $X \to K[l,r]$: nonterminal $X$ expands to a substring of the
expansion of nonterminal $K\neq X$.
\end{itemize}
\end{definition}

\begin{theorem}\label{th-PrezzaGr}
Let $G=\{X_i\to a_i, i=1, \ldots g'\} \cup \{X_i\to A_iB_i, i=1,\ldots, g''\}\cup\{Y_i\to Z_i^{l_i}, l_i\geq 2, i=1,\ldots, g'''\}\cup\{W_i\to K_i[l_i\ldots r_i], i=1\ldots, g''''\}$ be a collage system of size $g=g'+g''+g'''+g''''$ generating a word $w$. Then $w$ has a string attractors of size at most $g$.
\end{theorem}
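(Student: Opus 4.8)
The plan is to attach to each rule of the collage system a single position of $w$ and to show that the resulting set of at most $g$ positions is a string attractor. Write $\exp(X)$ for the word derived from a nonterminal $X$ and $R(X)$ for the set of rules reachable from $X$, so that $|R(S)|\le g$ for the start symbol $S$; since the collage system generates a finite word its dependency relation is acyclic, and the nonterminals admit a topological order. I would prove, by induction along this order, that every $\exp(X)$ has a \emph{coherent} string attractor $\Gamma_X$ with $|\Gamma_X|\le|R(X)|$, where coherent means that $\Gamma_X$ has one position per rule of $R(X)$, the position charged to a rule with head $Z$ sitting at a prescribed offset $\mu(Z)$ --- depending only on $Z$'s rule --- of some occurrence of $\exp(Z)$ inside $\exp(X)$; here $\mu(Z)=1$ if $Z\to a$, $\mu(Z)=|\exp(A)|$ if $Z\to AB$, and $\mu(Z)=|\exp(P)|$ (the end of the first copy) if $Z\to P^{\ell}$. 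The induction simultaneously carries the sharper statement that every factor $F$ of $\exp(X)$ can be assigned a rule $\rho\in R(X)$, with head $Z$, such that $F$ has an occurrence inside $\exp(Z)$ straddling position $\mu(Z)$; since $\Gamma_X$ records an occurrence of $\exp(Z)$ inside $\exp(X)$, this is exactly what makes the chosen positions composable and delivers the attractor property. Applying the statement to $S$ gives $\gamma^{*}(w)\le g$.

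The base case $X\to a$ uses $\Gamma_X=\{1\}$. For $X\to AB$, take $\Gamma_X=\Gamma_A\cup\{|\exp(A)|\}\cup\{|\exp(A)|+p:p\in\Gamma_B\}$, but \emph{identify} the positions charged to any rule that is reachable both from $A$ and from $B$, keeping the copy inside $\exp(A)$; this keeps $|\Gamma_X|\le|R(A)\cup R(B)\cup\{X\to AB\}|=|R(X)|$. A factor $F$ of $\exp(X)$ either straddles the cut after $\exp(A)$ --- and then it occurs there across position $|\exp(A)|$, so we assign it the rule $X\to AB$ --- or it lies inside $\exp(A)$ or inside $\exp(B)$, and then the induction hypothesis assigns it a rule $\rho$, with head $Z$, together with an occurrence of $F$ inside $\exp(Z)$ straddling $\mu(Z)$; since $\exp(Z)$ has an occurrence inside $\exp(X)$ recorded by $\Gamma_X$ (using the kept copy inside $\exp(A)$ when $\rho$ is reachable from both), $F$ occurs inside $\exp(X)$ across the corresponding position of $\Gamma_X$. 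For $Y\to Z^{\ell}$, put $\Gamma_Y=\Gamma_Z\cup\{|\exp(Z)|\}$: the factors not already handled by $\Gamma_Z$ are precisely those not contained in a single copy of $\exp(Z)$, and since $\exp(Y)=\exp(Z)^{\ell}$ has period $|\exp(Z)|$, any occurrence of such a factor can be translated by a multiple of $|\exp(Z)|$ until it straddles the end of the first copy, the translated occurrence still lying inside $\exp(Y)$ because $\ell\ge2$; we assign it the rule $Y\to Z^{\ell}$.

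The substring rule $W\to K[l,r]$, with $\exp(W)=\exp(K)[l..r]$, is the genuinely delicate case and the step I expect to be the main obstacle. One cannot just reuse $\Gamma_K$: its positions may lie outside the window $[l..r]$, and a sub-expansion reachable from $K$ need not occur inside the window at all, so the coherent presentation does not transport verbatim. The approach I would take is to rebuild $\Gamma_W$ by descending through the derivation tree of $\exp(K)$ restricted to the window: for a factor $F$ of $\exp(W)$, take its occurrence inside $[l..r]$ and the lowest node of that tree covering it; if that node's span stays inside the window one argues as in the concatenation and repetition cases above, and otherwise one either descends further along the borders of the window or assigns $F$ the (now non-coherently placed) boundary of that node --- a recursion that terminates because the collage system is acyclic. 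What needs care is keeping this within budget: one must argue, using acyclicity of the collage system (which forces the node labels along any root-to-leaf path of the derivation tree to be pairwise distinct), that the descent needs at most one position of $\Gamma_W$ per rule of $R(K)$, so that with the single extra slot for $W\to K[l,r]$ itself the total is at most $|R(K)|+1=|R(W)|$; and one must check that the occurrence of $F$ produced by the descent genuinely lies inside $\exp(W)\subseteq w$. Settling this last case completes the induction and the theorem.
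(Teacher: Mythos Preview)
The paper does not prove this theorem; it merely quotes it from Kempa and Prezza, so there is no in-paper argument to compare against. Your treatment of the terminal, concatenation and repetition rules is essentially the standard one and is correct: placing a single boundary position and re-using the inductively built positions via your ``coherent'' bookkeeping (one position per reachable rule, at a fixed offset $\mu(Z)$ inside some occurrence of $\exp(Z)$) does exactly what is needed, and your identification trick for rules reachable from both $A$ and $B$ is the right way to keep the count at $|R(X)|$.

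The genuine gap is the substring rule $W\to K[l,r]$, which you yourself leave open. The difficulty is not cosmetic. Your coherent invariant demands, for every rule with head $Z\in R(W)$, an occurrence of $\exp(Z)$ \emph{inside} $\exp(W)$ at which to anchor the position $\mu(Z)$; after truncation such an occurrence may simply not exist, so the invariant cannot be maintained as stated. Your proposed workaround --- descend through the derivation tree of $\exp(K)$ restricted to the window and charge boundary positions --- is the right instinct but is not yet a proof: you do not specify $\mu(W)$ for the rule $W\to K[l,r]$ itself, and the budget claim (``at most one position per rule of $R(K)$'') does not follow from acyclicity alone, because the same rule can label many nodes of the derivation tree whose spans meet the window differently, and your descent along the two borders of $[l,r]$ could charge distinct positions to the same rule. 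The published proof sidesteps this by abandoning the per-nonterminal induction: it fixes globally one occurrence in $w$ of each nonterminal's expansion, places all $g$ positions at once, and then, for a factor of a truncated expansion, walks \emph{up} the parse tree of $K$ until it reaches a node whose full expansion lies in $w$, reducing to the concatenation/repetition cases already handled. Until you carry out something of this kind for the truncation rule, the argument is incomplete.
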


By using this theorem one can easily find the minimum string attractor for a particular class of very ``regular'' strings, as shown in the following corollary. 


\begin{corollary}
Let $u \in \Sigma^*$ be a word.
If $u$ is the form of $u=\sigma_{i_1}^{n_1}\sigma_{i_2}^{n_2} \cdots \sigma_{i_k}^{n_k}$ (where all $\sigma_{i_j}$ are different symbols in the alphabet), then  $\Gamma=\{n_1,n_1+n_2, \ldots, n_1+\ldots+n_k\}$ is a string attractor of minimum size for $u$, so $\gamma^*(u)=\sigma$.
\end{corollary}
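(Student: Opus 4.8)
The plan is to combine the collage-system bound of Theorem~\ref{th-PrezzaGr} for the upper bound with the trivial alphabet lower bound $\gamma^*(u)\geq|\Sigma|$ recorded just after the definition of smallest string attractor. For the upper bound, I would build an explicit collage system of size $\sigma$ generating $u=\sigma_{i_1}^{n_1}\sigma_{i_2}^{n_2}\cdots\sigma_{i_k}^{n_k}$. First introduce one terminal rule $X_j\to\sigma_{i_j}$ for each symbol; since all $\sigma_{i_j}$ are distinct, there are exactly $\sigma$ such rules and this is the only place terminals are needed. Then use rules of type $Y_j\to X_j^{\,n_j}$ to produce each maximal block $\sigma_{i_j}^{n_j}$ (treating the degenerate case $n_j=1$ by just reusing $X_j$, so as not to violate the $\ell\geq 2$ requirement). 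Finally, chain the blocks together with $\sigma-1$ binary rules of type $Z\to AB$ to concatenate $Y_1,\dots,Y_k$ in order. The subtlety here is the bookkeeping on the size $g=g'+g''+g'''+g''''$: the terminal rules contribute $g'=\sigma$, but the composition and repetition rules should be arranged so they do not add to the count beyond what Theorem~\ref{th-PrezzaGr} charges — in fact the cleanest route is to observe that $u$ is literally a word of the form handled by the first sentence of the paragraph preceding the corollary (``very regular strings''), so one only needs the $\sigma$ terminal rules plus structural rules, and to check that the theorem's accounting still yields a string attractor of size $\sigma$. I expect this size accounting to be the only real obstacle; once the collage system is written down correctly, Theorem~\ref{th-PrezzaGr} immediately gives $\gamma^*(u)\leq\sigma$.

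For the matching lower bound, I would invoke $\gamma^*(u)\geq|\Sigma|=\sigma$ directly: every one of the $\sigma$ distinct single-letter factors $\sigma_{i_1},\dots,\sigma_{i_k}$ (which, since the $\sigma_{i_j}$ are all distinct, exhausts the alphabet, so $k=\sigma$) must have an occurrence covered by the attractor, and distinct letters require distinct attractor positions because a single position covers a single letter. Hence $\gamma^*(u)=\sigma$.

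It remains to verify that the specific set $\Gamma=\{n_1,\,n_1+n_2,\,\ldots,\,n_1+\cdots+n_k\}$ is itself a string attractor, not merely that some attractor of size $\sigma$ exists. This is a short direct check: position $n_1+\cdots+n_j$ sits at the last letter of the $j$-th block, i.e.\ at the boundary between $\sigma_{i_j}$ and $\sigma_{i_{j+1}}$. Any factor $u[p,q]$ of $u$ is determined by which blocks it touches; since each block consists of a single repeated symbol, a factor spanning blocks $j,j+1,\dots,m$ has the form $\sigma_{i_j}^{\,a}\sigma_{i_{j+1}}^{\,n_{j+1}}\cdots\sigma_{i_{m-1}}^{\,n_{m-1}}\sigma_{i_m}^{\,b}$ for some $a\le n_j$, $b\le n_m$, and this particular occurrence already crosses the position $n_1+\cdots+n_j\in\Gamma$ (when $j<m$) or, when the factor lies inside a single block $j$, it is a factor $\sigma_{i_j}^{\,c}$ with $c\le n_j$, which has an occurrence ending at position $n_1+\cdots+n_j$ and hence crossing that element of $\Gamma$. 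Thus every factor is covered, $|\Gamma|=k=\sigma$, and by the lower bound $\Gamma$ is of minimum size, completing the proof. The main thing to be careful about throughout is the identification $k=\sigma$ forced by the distinctness hypothesis, and the degenerate exponents $n_j=1$ in the collage-system construction; neither is deep, but both must be handled explicitly.
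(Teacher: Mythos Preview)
Your proof is correct, and in fact cleaner than what the paper suggests. The paper states this result without proof, placing it as an immediate corollary of Theorem~\ref{th-PrezzaGr} on collage systems; but, as you yourself notice, the bookkeeping there does not work: any collage system generating $u$ needs the $\sigma$ terminal rules \emph{plus} the repetition rules $Y_j\to X_j^{n_j}$ \emph{plus} $\sigma-1$ concatenation rules, so its size $g$ is strictly larger than $\sigma$, and Theorem~\ref{th-PrezzaGr} only gives an attractor of size at most $g$. So the collage-system route, taken literally, does not yield $\gamma^*(u)\le\sigma$.

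Your direct verification in the last paragraph is the right argument and is complete on its own: every factor of $u$ either lies inside a single block $\sigma_{i_j}^{n_j}$, hence has an occurrence ending at position $n_1+\cdots+n_j\in\Gamma$, or spans at least two consecutive blocks, hence already crosses the boundary position $n_1+\cdots+n_j$ for the first block it touches. Together with the alphabet lower bound $\gamma^*(u)\ge|\Sigma|=\sigma$ (and the observation $k=\sigma$ forced by distinctness of the $\sigma_{i_j}$), this gives $\gamma^*(u)=\sigma$ and that $\Gamma$ realises the minimum. You can safely drop the collage-system preamble entirely; it adds nothing and, as you suspected, cannot be made to give the sharp bound.
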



\section{Minimum size string attractors}

In this section, we analyze the words whose smallest string attractor has size equal to the the size of the alphabet, that is the minimum possible size. 
In the following two subsections we distinguish the case of binary words and the case of words over alphabets with more than two letters.

\subsection{Binary words}

In this subsection we focus on an infinite family of finite binary words whose minimum string attractor has size $2$.

Standard Sturmian words is a very well known family of binary words that are the basic bricks used for the construction of infinite Sturmian words, in the  sense  that  every  characteristic Sturmian word is the limit of a sequence of standard words (cf. Chapter 2 of \cite{Lothaire:2005}). 
These words have a multitude of characterizations and appear as extremal case in a very great range of contexts \cite{KMP77,CS09}. In this paper two of their characterization are particularly useful, that is a special decomposition into palindrome words and an extremal property on the periods of the word that is closely related to Fine and Wilf's theorem (cf. \cite{deLucaMignosi1994,deLuca1997}). More formally, standard Sturmian words can be defined in the following way which is a natural generalization of the definition of the Fibonacci word. Let $q_0,q_1,\ldots q_n,\ldots$ any sequence of natural integers such that $q_0 \geq 0$ and $q_i > 0$ ($i = 1,\ldots,n$), called \emph{directive sequence}. The sequence $\{s_n\}_{n\geq 0}$ can be defined inductively as follows: $s_0 = b$, $s_1=a$, $s_{n+1}= (s_{n})^{q_{n-1}}s_{n-1}$, for $n > 1$. We denote by $Stand$ the set of all words $s_n$, $n\geq 0$, constructed for any directive sequence of integers.

Furthermore, another characterization of standard Sturmian words is related to the Burrows Wheeler Transform (BWT) since, for binary alphabets, the application of the BWT to standard Sturmian words produces a total clustering of all the instances of any character (cf. \cite{MaReSc}), as reported in the following theorem.

\begin{theorem}[\cite{MaReSc}]\label{th_cluster}
Let $w \in \Sigma^*$. Then $w$ is a conjugate of a word in $Stand$ if and only if $bwt(w) = b^pa^q$ with $gcd(p,q)=1$.
\end{theorem}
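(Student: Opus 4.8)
The plan is to exploit the invariance of $bwt$ under conjugacy — so that both the hypothesis $bwt(w)=b^pa^q$ and the conclusion ``$w$ is a conjugate of a word in $Stand$'' depend only on the conjugacy class of $w$ — and to bridge the two sides through the notion of \emph{circular balance} (a binary word $w$ is circularly balanced if, for every $k$, any two circular factors of $w$ of length $k$ differ in their number of occurrences of $a$ by at most one). I will take as known two classical facts, both provable by induction on the directive sequence via $s_{n+1}=s_n^{q_{n-1}}s_{n-1}$ and otherwise available in \cite{Lothaire:2005}: (1) every $s_n$ satisfies $\gcd(|s_n|_a,|s_n|_b)=1$, where $|s_n|_a$ (resp.\ $|s_n|_b$) is the number of occurrences of $a$ (resp.\ $b$) — an easy computation with the invariant $|s_{n+1}|_a|s_n|_b-|s_n|_a|s_{n+1}|_b=\pm1$ — whence each $s_n$ is primitive; and (2) the primitive circularly balanced binary words are exactly the conjugates of the words of $Stand$. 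I also record one elementary fact: sorting the conjugates of $u^k$ shows that $bwt(u^k)$ is $bwt(u)$ with each symbol repeated $k$ times, so if $bwt(w)=b^pa^q$ with $\gcd(p,q)=1$, then $w$ must be primitive.

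The core of the proof is the equivalence, for any word $w$ with $p$ occurrences of $b$ and $q$ occurrences of $a$,
\[
bwt(w)=b^pa^q \iff w \text{ is circularly balanced.}
\]
For ``$\Leftarrow$'' I argue by contraposition: if $bwt(w)\neq b^pa^q$, then in the sorted list of conjugates some row $R$ ending in $a$ precedes some row $R'$ ending in $b$; letting $z$ be the longest common prefix of $R$ and $R'$, the word $z$ is a proper prefix of both, $za$ is a prefix of $R$ and $zb$ a prefix of $R'$, and since the last letter of each conjugate is the circular predecessor of its first letter, both $aza$ and $bzb$ occur as circular factors of $w$, so $w$ is unbalanced. (A short length count gives $|z|\le|w|-2$, and $aza,bzb$ cannot be whole rotations since they have different numbers of $b$'s, so these factors genuinely fit.) The direction ``$\Rightarrow$'' runs the same correspondence in reverse: if $w$ is unbalanced, the classical balance lemma gives a word $z$ with $aza$ and $bzb$ both circular factors; reading the rotation of $w$ that begins at the occurrence of $z$ embedded in $bzb$ produces a conjugate $r'=zb\cdots b$ ending in $b$, and $aza$ likewise produces a conjugate $r=za\cdots a$ ending in $a$, with $r<_{\mathrm{lex}}r'$, so in the sorted list a row ending in $a$ precedes a row ending in $b$, i.e.\ $bwt(w)\neq b^pa^q$.

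Putting the pieces together proves the theorem. If $w$ is a conjugate of some $s_n$, then by (1) it is primitive with coprime letter counts, and by (2) it is circularly balanced; the equivalence then gives $bwt(w)=b^pa^q$, and $\gcd(p,q)=\gcd(|w|_b,|w|_a)=1$ by (1) and conjugacy-invariance. Conversely, if $bwt(w)=b^pa^q$ with $\gcd(p,q)=1$, then $w$ is primitive by the elementary fact and circularly balanced by the equivalence, hence a conjugate of a word in $Stand$ by (2). The degenerate cases $p=0$ and $q=0$ force $q=1$ and $p=1$ respectively and correspond to the conjugates of $s_1=a$ and $s_0=b$.

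I expect the main obstacle to be the careful bookkeeping in the two halves of the central equivalence: making precise the correspondence between ``a pair of conjugates witnessing $bwt(w)\neq b^pa^q$'' and ``a word $z$ with both $aza$ and $bzb$ circular factors'', and handling the boundary cases where $z$ is almost as long as $w$; the facts about standard words themselves are routine. An alternative proof of ``$\Leftarrow$'' that sidesteps fact (2) is to invert the transform directly: for $L=b^pa^q$ the $LF$-mapping (the last-to-first column map of the $bwt$) is the rotation $i\mapsto i+q \pmod{|w|}$, which is a single cycle precisely when $\gcd(p,q)=1$, and applying the inverse $bwt$ then exhibits $w$ as a mechanical (Christoffel) word of slope $q/|w|$ in lowest terms, hence a conjugate of a standard word.
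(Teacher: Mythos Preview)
The paper does not prove this statement: Theorem~\ref{th_cluster} is quoted from \cite{MaReSc} and used as a black box in the proof of Theorem~\ref{th:string_attract}. So there is no ``paper's own proof'' to compare against.

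That said, your argument is sound and is in fact the standard route taken in the cited source: reduce to the conjugacy class via $bwt$-invariance, identify ``$bwt(w)=b^pa^q$'' with circular balance by the contrapositive pair (an $a$ preceding a $b$ in the last column yields a witness $aza,bzb$ of imbalance, and conversely), and then invoke the classical characterisation of primitive circularly balanced binary words as the conjugates of standard words together with the coprimality of the letter counts. The boundary bookkeeping you flag (ensuring $|z|\le|w|-2$ so that $aza,bzb$ are genuine circular factors and not full rotations) is indeed the only delicate point, and your remark that $aza$ and $bzb$ have different Parikh vectors handles it. The alternative you sketch via the $LF$-mapping being the rotation $i\mapsto i+q\pmod{p+q}$ is also correct and gives a direct reconstruction of the Christoffel/mechanical word; this is a cleaner way to get the ``$\Leftarrow$'' direction without appealing to fact~(2).
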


In the following theorem, for each standard Sturmian word, we individuate a string attractor, whose positions are strictly related with particular decompositions of such words depending on their periodicity. In particular, we recall that $Stand=\{a,b\}\cup PER\{ab,ba\}$ (cf. \cite{deLucaMignosi1994}), where $PER$ is the set of all words $v$ having two periods $p$ and $q$ such that $gcd(p,q)=1$ and $|v| = p + q - 2$. Given a word $w\in Stand$, we denote by $\pi(w)$ its prefix of length $|w|-2$, belonging to the set $PER$, uniquely defined by using previous equality.  By using a property of words in $PER$ (cf. \cite{deLucaMignosi1994}), $\pi(w)=QxyP=PyxQ$, where $x\neq y$ are characters and $Q$ and $P$ are uniquely determined palindromes. So, a standard Sturmian word $w=\pi(w)ba$ can be decomposed as $w=QxyPba=PyxQba$. We call $PER$ decompositions such  factorizations of $w$.

\begin{theorem}\label{th:string_attract} 
By using $PER$ decompositions, a Standard sturmian word can be decomposed as $w=QxyPba=PyxQba$.
For each $w\in Stand$ with $|w|\geq 2$,  let $\eta$ be the length of the longest palindromic proper prefix of $\pi(w)$, the set $\Gamma_1=\{\eta+1,\eta+2\}$ or the set $\Gamma_2=\{|w|-\eta-3,|w|-\eta-2\}$ is a minimum string attractor for $w$.  
\end{theorem}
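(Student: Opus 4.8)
The lower bound is immediate: for $|w|\geq 2$ the word $w$ contains both letters of $\Sigma$, so $\gamma^*(w)\geq|\Sigma|=2$ by the remark following the definition of string attractor. Hence it suffices to prove that one of $\Gamma_1,\Gamma_2$ is a string attractor for $w$, and it is then automatically of minimum size $2$.

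Set $m=|\pi(w)|$, so $|w|=m+2$. I would first record the structural facts about the central word $\pi(w)$ that the positions $\eta+1,\eta+2$ really encode (de Luca): $\pi(w)$ is a palindrome; its two coprime periods are $|P|+2$ and $|Q|+2$ with $|P|+|Q|+2=m$; and, taking $|P|\geq|Q|$, the longer palindrome $P$ is precisely the longest palindromic proper prefix (hence also suffix) of $\pi(w)$, so $\eta=|P|$. A one-line computation rewrites the two sets as $\Gamma_1=\{|P|+1,|P|+2\}$ and $\Gamma_2=\{|Q|+1,|Q|+2\}$ — the two ``central'' pairs of positions in the $PER$ decompositions $w=QxyPba$ and $w=PyxQba$ — and shows that $\Gamma_1$ and $\Gamma_2$ are exchanged by the reflection $j\mapsto m+1-j$ that fixes $\pi(w)=w[1..m]$.

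The plan is then to split the factors of $w$ into two groups. Group A: the factors that occur inside $\pi(w)$, i.e. the factors of $\pi(w)$. Since $\pi(w)$ is a palindrome, an occurrence of a factor $u$ inside $\pi(w)$ crossing $\Gamma_1$ yields, under the reflection above, an occurrence of $\tilde u$ crossing $\Gamma_2$, and conversely; as the set of factors of the palindrome $\pi(w)$ is closed under reversal, covering all of Group A by $\Gamma_1$ is \emph{equivalent} to covering it by $\Gamma_2$, so for Group A the choice is irrelevant and I only need to cover every factor of $\pi(w)$ by $\Gamma_1$. For this I would use that $\pi(w)$ carries both periods $|P|+2$ and $|Q|+2$ and is the maximal word doing so (Fine--Wilf): periodicity forces every occurring factor to have an occurrence whose start lies in a window of length $\min(|P|,|Q|)+2=|Q|+2$, and, combining the two periods with the palindromic symmetry, one pushes such an occurrence until it crosses position $\eta+1$ or $\eta+2$ — the factors long enough to be forced around the centre of $\pi(w)$ being automatic, the short ones handled by the two periods and the reflection. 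Group B: the finitely many factors of $w$ that are not factors of $\pi(w)$, i.e. those containing position $m+1$ or $m+2$ (they use the suffix $ba$, resp. $ab$). These are governed by the suffix of $\pi(w)$, namely the palindrome $P$ followed by $yx$; a short case analysis on the last letter of $P$ and on whether $w$ ends in $ba$ or $ab$ shows that exactly one of $\Gamma_1,\Gamma_2$ covers all of Group B. Picking that $\Gamma_i$, it covers Group A as well by the paragraph above, so it is a string attractor of size $2$.

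The crux is the Group-A periodicity argument: plain periodicity localises an occurrence only to a window of length $|Q|+2$, whereas I need it inside the length-$2$ window $\{\eta+1,\eta+2\}$, so one must exploit the two coprime periods simultaneously (this is where coprimality and maximality of $\pi(w)$ genuinely enter) together with the reflection that moves an occurrence sitting on the ``wrong'' side of $\pi(w)$ to its mirror on the good side; the bookkeeping of which factors are long enough to straddle the centre is the delicate point. A viable alternative for the whole upper bound is induction on the directive sequence, using $s_{n+1}=s_n^{q_{n-1}}s_{n-1}$ and the way $\eta$ (equivalently $P$) propagates under this recursion, carrying through the explicit list of attractor positions rather than just their number — note that a bare induction through Propositions~\ref{prop:concatenate} and \ref{p-power} would only yield $\gamma^*=O(\log|w|)$; there the obstacle relocates to controlling how the longest palindromic prefix of $\pi(s_{n+1})$ is inherited from those of $\pi(s_n)$ and $\pi(s_{n-1})$.
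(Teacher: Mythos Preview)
Your route is entirely different from the paper's. The paper does not analyse factors of $\pi(w)$ at all; instead it invokes Theorem~\ref{th_cluster} ($bwt(w)=b^{p}a^{q}$ for conjugates of standard words), so the BWT of $w$ has exactly two runs, and then adapts the run-to-attractor construction behind Theorem~\ref{th:prezzaBWT} to obtain a two-element attractor given by the positions in $w$ that land at the run boundaries. The remaining work is purely to \emph{locate} those two positions: the paper cites the Berstel--de~Luca description of the lexicographically smallest and largest conjugates of $w$ in terms of $P$ and $Q$, from which $\Gamma_1$ or $\Gamma_2$ is read off directly depending on whether $x=a$ or $x=b$ and whether $w$ ends in $ab$ or $ba$. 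No Fine--Wilf, no periodicity shifting, no Group~A/Group~B split.

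Your proposal is more self-contained in spirit (no BWT, no compressor lemma), but the Group~A step is a genuine gap as it stands. Periodicity with the smaller period localises the start of an occurrence only to a window of that period's length, not to a window of length~$2$; the sentence ``combining the two periods with the palindromic symmetry, one pushes such an occurrence until it crosses position $\eta+1$ or $\eta+2$'' is precisely the statement to be proved, and you yourself flag it as ``the delicate point'' without supplying the argument. Concretely, for short factors (length below $|q-p|$) a single period does not bring an occurrence into $\{\eta+1,\eta+2\}$, and the second period relates very few positions because $|\pi(w)|<2q$; so the interaction of the two periods with the reflection needs an explicit argument, not a gesture. The Group~B ``short case analysis'' is also not given, and it is where the choice between $\Gamma_1$ and $\Gamma_2$ is actually decided (it depends on $x$ and on whether $w$ ends $ab$ or $ba$), so it cannot be waved away either. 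The paper's BWT shortcut bypasses both issues because the Kempa--Prezza run construction already certifies coverage of \emph{all} factors in one stroke; what you gain with your approach, if you complete it, is independence from the BWT machinery and a proof that stays inside elementary word combinatorics.
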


\begin{proof}
Let us suppose that $w=\pi(w)ba$.  By using a property of words in $PER$ (cf. \cite{deLucaMignosi1994}), $\pi(w)=QxyP=PyxQ$, where $x\neq y$ are characters and $Q$ and $P$ are uniquely determined palindromes. Let us suppose that $|Q|>|P|$. So, $\eta=|Q|$. Firstly we suppose that $x=b$. This means that $w=QbaPba=PabQba$. From a result in \cite{BersteldeLuca97} $aPabQb$ and $bQbaPa$ are the smallest and the greatest conjugates in the lexicographic order, respectively. By Theorem \ref{th_cluster} and by using an argument similar to the proof of Theorem \ref{th:prezzaBWT}, a string attractor can be constructed by considering the positions corresponding to the end of each run. It is possible to see that such positions correspond to the two characters following the prefix $P$ of length $|w|-\eta-4$. If $x=a$, then $w=QabPba=PbaQba$. In this case $aQabPb$ and $bPbaQa$ are the smallest and the greatest conjugates in the lexicographic order, respectively. In this case the ending positions of each run in the output of $BWT$ correspond to the two characters following the prefix $Q$. So, the positions in the string attractor are $\{\eta+1, \eta+2\}$. The case $w=\pi(w)ab$ can be proved analogously by considering the starting characters of each run in the clustered output of $BWT$. \qed
\end{proof}

\begin{example}
Given the standard Sturmian word $w=ababaababa ab aba ba$, the $PER$ decompositions of $w$ are $ababaababa.ab.aba.ba=aba.ba.ababaababa.ba$, then $\{11,12\}$ is a (smallest) string attractor for $w$, since $\eta=10$. Given $v=abaababaababa$, its $PER$ decompositions are $abaaba.ba.aba.ba=aba.ab.abaaba.ba$ and $\eta=6$. So, $\{4,5\}$ is a string attractor.

\end{example}

Previous theorem shows an infinite family of finite binary words such that the size of the smallest string attractor is minimum. In Example \ref{ex:minimum2} we provide some binary words not belonging to $Stand$ with minimum size of string attractor. An open question is to characterize all the binary words with string attractor of size $2$. Furthermore, we can remark that for the standard Sturmian words one can construct string attractors whose positions are consecutive. This fact could be related to the number of distinct factors appearing in the words. It could be interesting to investigate which classes of binary words have minimum size string attractors containing consecutive positions.

\begin{example}\label{ex:minimum2}
A possible set of smallest string attractor of the words $u=a^nb^m$ or $w=b^na^m$ is $\{n,n+m\}$. Moreover, if we consider the word $u=(ab)^{n_1}(ba)^{n_2}$ or $u=(ba)^{n_1}(ab)^{n_2}$ a smallest string attractor is
$\{2{n_1},2{n_1}+2{n_2}\}$.
\end{example}

\subsection{The case of bigger alphabets}
We now analyze the string attractors for words of more of two letters by generalizing the standard Sturmian words.
Numerous generalizations of Sturmian sequences have been introduced for an alphabet with more than $2$ letters. Among them, one natural generalization are the episturmian sequences that are defined by using the palindromic closure property of Sturmian sequences (cf. \cite{PAQUIN2009}). 
Here we consider some special prefixes of episturmian sequences that are balanced, called \emph{finite epistandard words} \cite{Paquin2007}. We remark that, a word $w$ is balanced if, for any symbol $a$, the numbers of $a$'s in two factors of $w$ of the same length differ at most by $1$ and it is circularly balanced if each conjugate is balanced.
In the case of a binary alphabet they correspond to the standard Sturmian words.

We focus on the circularly balanced epistandard words defined in  Theorem~\ref{Theo:BalEpistClusterized} (see \cite{Paquin2007,RestivoRosoneTCS2011}), that can be built via the \emph{iterated palindromic closure} function. The \emph{iterated palindromic closure} function \cite{Justin2005}, denoted by $Pal$, is defined recursively as follows. Set $Pal(\varepsilon) = \varepsilon$ and, for any word $w$ and letter $x$, define $Pal(wx)~=~(Pal(w)x)^{(+)}$, where $w^{(+)}$, the \emph{palindromic right-closure} of $w$, is the (unique) shortest palindrome having $w$ as a prefix (see \cite{deLuca1997}).
Circularly balanced epistandard words (up to letter permutation), like the standard Sturmian words, are perfectly clustered words under application of the BWT, i.e. the BWT produces a new word that has the minimum number of clusters (\cite{RestivoRosoneTCS2011}, see also \cite{puglisiSimpson2008,RestivoRosoneTCS2009} for more details about the perfectly clustered words on more letters).


\begin{theorem}\label{Theo:BalEpistClusterized}
 Any 
circularly balanced finite epistandard word $t$
belongs to one of the following three families (up to letter permutation):
\begin{enumerate}[label=(\roman*)]
\item \label{type1}   $t = p a_2$, with  $p = Pal(a_1^m a_k a_{k-1}\cdots a_3)$,  where $k\geq 3$ and $m\geq 1$;
\item \label{type2}  $t =  p a_2$, with  $p = Pal(a_1 a_k a_{k-1} \cdots a_{k-\ell} a_1 a_{k-\ell-1}a_{k-\ell-2} \cdots a_3)$, where $0 \leq \ell \leq k -4$ and $k\geq 4$;
\item \label{type3} $t = Pal(a_1 a_k a_{k-1} \cdots a_2)$,  where $k\geq 3$.
\end{enumerate}
\end{theorem}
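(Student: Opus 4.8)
The plan is to prove Theorem~\ref{Theo:BalEpistClusterized} as a classification of which directive data produce circularly balanced epistandard words, following \cite{Paquin2007,RestivoRosoneTCS2011}. Since being epistandard and being circularly balanced are both invariant under a permutation of $\Sigma$, I would fix the order $a_1<a_2<\cdots<a_k$. Any finite epistandard word $t$ is a prefix of $Pal(\Delta)$ for some directive sequence $\Delta$; writing $v$ for the longest prefix of $\Delta$ with $Pal(v)$ a prefix of $t$, one has $t=Pal(v)\cdot s$, where $s$ is a proper prefix of the block that the next directive letter appends to $Pal(v)$. The theorem will be reached by showing that circular balance forces $v$, the next directive letter, and $s$ into exactly the three shapes listed. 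Throughout I would also track the run structure of $bwt(t)$, which by \cite{RestivoRosoneTCS2011} --- the multi-letter counterpart of Theorem~\ref{th_cluster} --- is \emph{perfectly clustered}, of the form $a_k^{n_k}a_{k-1}^{n_{k-1}}\cdots a_1^{n_1}$, for circularly balanced words; this serves as a bookkeeping device and a consistency check.

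The technical engine is the recursive behaviour of $Pal$: the first occurrence of a new letter $x$ in the directive word inserts copies of $x$ that are widely spaced, separated by the palindrome built so far, whereas each repetition of a letter already present lengthens the periods of $Pal(\cdot)$ and hence the cyclic gaps between consecutive occurrences of the letters. For the forward direction I would exploit this to pin down $v$. A second occurrence of any letter $a_i$ with $i\geq 3$ in the directive word would force two copies of a long factor to be separated by the current palindrome and its reflection, exhibiting two factors of equal length whose counts of $a_i$ differ by at least $2$; hence among $a_3,\dots,a_k$ no letter is repeated, and a closer look shows they must occur in the strictly decreasing order $a_k,a_{k-1},\dots$. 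Only the smallest letter $a_1$ may be repeated --- as an initial block $a_1^m$, or as the initial $a_1$ plus exactly one later occurrence, whose position is the parameter $\ell$ --- and the letter $a_2$ is either appended after the closure ($s=a_2$, giving $t=Pal(v)a_2$) or is itself the last directive letter ($s$ completing the next block, giving $t=Pal(va_2)$). Sorting these cases produces precisely the families~\ref{type1}, \ref{type2} and \ref{type3}.

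The converse is a direct verification that each of the three families is circularly balanced: for each directive word one computes $Pal$ explicitly, determines the lexicographically least and greatest conjugates from the palindromic structure of $t$ (as in \cite{BersteldeLuca97}), and checks the balance condition on its factors; the strong periodic and palindromic structure of $t$ makes this routine, and one recovers the clustered form of $bwt(t)$ at the same time.

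I expect the forward direction to be the main obstacle. Converting the slogan ``repeating a letter breaks balance'' into a proof requires controlling, simultaneously along every prefix of the directive word, the periods of $Pal(\cdot)$ and the cyclic gaps between consecutive occurrences of each letter, and then producing the explicit pair of equal-length factors that witnesses the imbalance. The interplay between the repeated occurrences of $a_1$ and the position of $a_2$ --- the parameter $\ell$ of family~\ref{type2} --- is the real combinatorial core; once the admissible directive words are identified, the converse direction is routine.
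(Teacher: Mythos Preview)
The paper does not prove Theorem~\ref{Theo:BalEpistClusterized}. It is stated as a known result, with an explicit reference to \cite{Paquin2007,RestivoRosoneTCS2011} immediately before the statement (``We focus on the circularly balanced epistandard words defined in Theorem~\ref{Theo:BalEpistClusterized} (see \cite{Paquin2007,RestivoRosoneTCS2011})''), and no proof is given in the paper itself. So there is no ``paper's own proof'' to compare your proposal against.

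Your sketch is in the spirit of how the result is obtained in those references --- analysing the directive word of the palindromic closure and showing that circular balance forbids repetitions of letters other than $a_1$ and forces the order in which the $a_i$ appear --- and you correctly identify the forward direction as the substantive part. But be aware that your proposal remains a high-level plan, not a proof: the step ``a second occurrence of any letter $a_i$ with $i\geq 3$ \dots would force two factors of equal length whose counts of $a_i$ differ by at least $2$'' is exactly the combinatorial lemma that carries all the weight, and it is not obvious from the recursion for $Pal$ alone. If you intend to supply a self-contained proof rather than cite \cite{Paquin2007}, that step needs to be made precise.
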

We observe that the words of the last family of Theorem~\ref{Theo:BalEpistClusterized}  correspond to the {\em Fraenkel's sequence}, that are words related to a important conjecture \cite{FRAENKEL1973}.

For each epistandard word, we find a possible string attractor and show that its size is $\sigma$.

\begin{theorem}\label{th:costantsizeMoreTwo}
If $w$ is a \emph{circularly balanced  epistandard} words, then the \emph{minimum size of string attractor} of $w$ is $\sigma$.
\end{theorem}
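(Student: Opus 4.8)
The plan is to exploit the fact, recalled just before the statement, that every circularly balanced epistandard word $w$ is \emph{perfectly clustered} under the BWT: up to letter permutation, $bwt(w)$ has exactly $\sigma$ equal-letter runs. Then Theorem~\ref{th:prezzaBWT} (or rather the explicit construction in its proof, which works verbatim for the BWT of $w$ itself when $w$ is primitive and one picks, for each run, the position in $w$ of the last symbol of that run) immediately yields a string attractor of size $\sigma$. Since any string attractor must cover all single-letter factors, $\gamma^*(w)\geq|\Sigma|=\sigma$, and the two bounds match, giving $\gamma^*(w)=\sigma$. So the skeleton is short; the work is in making the ``up to letter permutation'' and ``BWT of $w$ vs. $w\$$'' points precise, and in describing \emph{where} the attractor positions sit, paralleling Theorem~\ref{th:string_attract}.

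Concretely, I would proceed family by family according to Theorem~\ref{Theo:BalEpistClusterized}. First I would record that each such $w$ is primitive (it is a Lyndon-conjugate-type word arising from iterated palindromic closure, hence aperiodic), so its conjugates are pairwise distinct and $bwt(w)$ is well defined without appending $\$$. Next, invoking \cite{RestivoRosoneTCS2011}, $bwt(w)=a_{\pi(1)}^{n_1}a_{\pi(2)}^{n_2}\cdots a_{\pi(\sigma)}^{n_\sigma}$ for some permutation $\pi$ and exponents $n_i\geq 1$ with $\sum n_i=|w|$ --- exactly $\sigma$ runs. Then I would apply the attractor-from-runs construction: map the first (equivalently last) index of each run, through the LF/standard-permutation correspondence, back to a position in $w$; the resulting set $\Gamma$ of $\sigma$ positions is a string attractor by the same argument as in the proof of Theorem~\ref{th:prezzaBWT}. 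Combined with $\gamma^*(w)\geq\sigma$, this closes the case. For families \ref{type1} and \ref{type2}, where $t=p\,a_2$ with $p$ a palindrome produced by $Pal$, I expect the attractor positions to cluster near the ``center'' of $p$ in a way analogous to $\Gamma_1=\{\eta+1,\ldots\}$ of Theorem~\ref{th:string_attract}; for family \ref{type3} (the Fraenkel word), a direct description via the repetition structure of $Pal(a_1a_k\cdots a_2)$ may be cleaner.

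The main obstacle is the passage from ``$\sigma$ runs in the BWT'' to an explicit, provably correct set of $\sigma$ positions in $w$ itself: the proof of Theorem~\ref{th:prezzaBWT} as cited is stated for $bwt(w\$)$, and one has to check that appending no sentinel (legitimate here because $w$ is primitive, so all rotations are distinct and the Lyndon conjugate behaves well) does not break the argument, and that every factor of $w$ --- including those that ``wrap around'' a conjugate but are still genuine factors of the linear word $w$ --- is covered. A secondary nuisance is the ``up to letter permutation'' clause: the clustering results give a permuted alphabet, and one must note that relabelling letters is a bijection on factors and hence transports string attractors and preserves $\gamma^*$, so we may assume the BWT is already in the sorted-runs form $b^{n_1}\cdots$. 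Once these two points are dispatched, the lower bound $\gamma^*(w)\ge\sigma$ from the single-letter factors finishes the proof uniformly across all three families, with the only family-specific content being the bookkeeping that pins down the exact location of the $\sigma$ attractor positions (which I would present in a remark or example rather than belabor in the proof).
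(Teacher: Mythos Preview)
Your proposal is correct and aligns with the paper's own approach: the paper does not give a full proof, only the sketch ``the proof uses similar arguments as in Theorem~\ref{th:string_attract} and the notion of palindromic closure,'' which is precisely your strategy of combining the perfect clustering of $bwt(w)$ (from \cite{RestivoRosoneTCS2011}) with the run-to-attractor construction underlying Theorem~\ref{th:prezzaBWT}, together with the trivial lower bound $\gamma^*(w)\geq\sigma$. Your care about the sentinel-free BWT is appropriate and mirrors what the paper already does in the proof of Theorem~\ref{th:string_attract}; the only refinement the paper hints at (visible in Example~\ref{ex:epist}) is that the $\sigma$ attractor positions can be described explicitly as the positions where each new letter is inserted during the iterated palindromic closure, which you may want to state rather than leave as an expectation.
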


The proof of the theorem uses similar arguments as in Theorem \ref{th:string_attract} and the notion of palindromic closure. It will be detailed in the full version of the paper.

\begin{example}
We consider some circularly balanced epistandard words and their corresponding Lyndon words.
A smallest string attractor of
\begin{itemize}
    \item (type \ref{type1}) $w\,=\,(aaa\underline{a}\underline{d}aaaa\underline{c}aaaadaaaa)\underline{b}$, obtained by $p\,=\,Pal(a^4dc)$, is $\{4, 5, 10, 20 \}$,  whereas for  its Lyndon conjugate $aaaa\underline{b}aaa\underline{a}\underline{d}aaaa\underline{c}aaaad$ it is $\{5, 9, 10, 15 \}$.
    \item (type \ref{type2} for $k=4$ and $\ell=0$) $u=(a\underline{d}a\underline{c}ada\underline{a}dacada)\underline{b}$, obtained by $p=Pal(adca)$, is $\{2, 4, 8, 15 \}$ and of its Lyndon conjugate $a\underline{a}\underline{d}a\underline{c}ada\underline{b}adacad$ is $\{2, 3, 5, 9 \}$.
    \item (type \ref{type2} for $k=5$ and $\ell=0$) $u'=(a\underline{e}a\underline{a}ea\underline{d}aeaaea\underline{c}aeaaeadaeaaea)\underline{b}$, obtained by $p=Pal(aeadc)$, is $\{2, 4, 7, 14, 28 \}$ and of its Lyndon conjugate  $aaea\underline{b}aea\underline{a}\underline{e}a\underline{d}aeaaea\underline{c}aeaaeadae$ is $\{5, 9, 10, 12, 19\}$
    \item (type \ref{type2} for $k=5$ and $\ell=1$) $u''=(a\underline{e}a\underline{d}aea\underline{a}eadaea\underline{c}aeadaeaaeadaea)\underline{b}$, obtained by $p=Pal(aedac)$, is $\{2, 4, 8, 15, 30\}$ and of its Lyndon conjugate $aaeadaea\underline{b}aeadaea\underline{a}\underline{e}a\underline{d}aea\underline{c}aeadae$ is $\{9, 17, 18, 20,  24 \}$.
    \item (type \ref{type3}) $v=c\underline{a}\underline{c}\underline{b}cac$, obtained by $p=Pal(cab)$, is $\{2,3,4\}$ and of its Lyndon conjugate 
    $acbcacc$ is $\{3, 5, 6\}$.
    \end{itemize}
Note that, in this example, 
one has an attractor for each symbol $a$ of the alphabet and the position of such attractor in the epistandard word coincides with the position where the last occurrence letter $a \in p$ appears during the palindromic right-closure.
Note also that, in the case of Lyndon words, for each letter $a$, one has an attractor at the position of the last occurrence of $a$ in the run of the output of the BWT.
 \end{example}\label{ex:epist}



Unfortunately, the authors in \cite{RestivoRosoneTCS2011} show that there exist words that do not belong to the families in Theorem~\ref{Theo:BalEpistClusterized} that are perfectly clusterized via BWT, for instance the perfectly clustered word $u = a\underline{b}bbbb\underline{a}\underline{c}ac$ is not a finite epistandard word and a possible string attractor is $\{2,7,8\}$, whereas the perfectly clustered word $v = aacaabaac$ is a finite epistandard but it is not a balanced and a possible string attractor is $\{2,3,6\}$.
The characterization of all perfectly clustered word via BWT is out the scope of this paper.

It remains open the problem of characterizing all words on an alphabet on more of two letters that have the smallest string attractor having size equal to the cardinality of the alphabet.

\section{String Attractors in Thue-Morse Words}

In this section we consider the problem of finding a smallest string attractor for the family of finite binary Thue-Morse words. Thue-Morse words are a sequence of words obtained by the iterated application of a morphism as described below. 

\begin{definition}
Let us consider the alphabet $\Sigma=\{a,b\}$ and the morphism $\varphi:\Sigma^*\mapsto \Sigma^*$ such that $\varphi(a)=ab$ and $\varphi(b)=ba$. Let us denote by $t_n=\varphi^n(a)$ the $n$-th iterate of the morphism $\varphi$ that is called the $n$-th Thue-Morse word.
\end{definition}

It is easy to verify that, for each $n\geq 1$ the $n$-th Thue Morse word has length $2^n$.  The $n$-th Thue-Morse words for $n=3,4,5$ can be found in Figure \ref{fig:addmove}.




By using a result in \cite{Brlek89} on the enumeration of factors in Thue-Morse words the following lower bound is proved.

\begin{proposition}
Let $t_n=\varphi^n(a)$ be the $n$-th Thue-Morse word with $n>2$. Then $\gamma^*(t_n)\geq 3$. 
\end{proposition}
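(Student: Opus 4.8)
The plan is to rule out a string attractor of size $2$ for $t_n$ when $n>2$; since $t_n$ is a binary word we have $\gamma^*(t_n)\ge|\Sigma|=2$ unconditionally, so excluding size $2$ yields $\gamma^*(t_n)\ge 3$. The single tool I would use is Proposition~\ref{prop:kfactors}: if $t_n$ admitted a string attractor of size $2$, then for every $k$ it would contain at most $2k$ distinct factors of length $k$. Hence it suffices to exhibit, for each $n>2$, some length $k$ for which $t_n$ has strictly more than $2k$ distinct factors of length $k$.

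For the general case $n\ge 4$ I would take $k=4$. Using the identity $t_{m+1}=\varphi(t_m)=t_m\overline{t_m}$, where $\overline{w}$ denotes the word obtained from $w$ by exchanging $a$ and $b$, the word $t_4=abbabaabbaababba$ is a prefix, and hence a factor, of every $t_n$ with $n\ge 4$. By the enumeration of factors of the Thue--Morse word in \cite{Brlek89}, the infinite Thue--Morse word has exactly $10$ distinct factors of length $4$, and a direct inspection shows that $t_4$ already contains all of them, namely $abba,\ bbab,\ baba,\ abaa,\ baab,\ aabb,\ bbaa,\ aaba,\ abab,\ babb$. Therefore $t_n$ has at least $10>8=2\cdot 4$ distinct factors of length $4$, and Proposition~\ref{prop:kfactors} excludes a string attractor of size $2$, giving $\gamma^*(t_n)\ge 3$.

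The remaining case $n=3$ is the delicate one, and I expect it to be the main obstacle: $t_3=abbabaab$ has length $8$, it realizes exactly $2k$ distinct factors of length $k$ for $k=1,2,3$, and it is too short to realize the length-$4$ complexity $10$, so the counting argument above does not apply to $t_3$. Here I would instead reason directly on occurrences in $t_3$. The factors $bb$ and $aa$ each occur exactly once, in the intervals $[2,3]$ and $[6,7]$ respectively, so any string attractor of size $2$ must contain one position from $\{2,3\}$ and one position from $\{6,7\}$. Since $bab$ and $aba$ likewise occur exactly once, in $[3,5]$ and $[4,6]$, the attractor is then forced to be exactly $\{3,6\}$. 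But the factor $ab$ occurs in $t_3$ only in the intervals $[1,2]$, $[4,5]$ and $[7,8]$, none of which contains position $3$ or $6$; this is a contradiction, so $\gamma^*(t_3)\ge 3$ as well, and the proposition follows. The crux is thus entirely in handling this base case, where the bound from Proposition~\ref{prop:kfactors} is tight and must be replaced by the occurrence-uniqueness argument above.
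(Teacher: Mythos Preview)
Your argument is correct and, for $n\ge 4$, follows exactly the route the paper indicates: combine Brlek's factor enumeration with Proposition~\ref{prop:kfactors} to rule out a size-$2$ attractor. The paper only sketches ``by using a result in \cite{Brlek89} on the enumeration of factors'' without further detail, so your write-up is in fact more complete.

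Your separate treatment of $n=3$ is a genuine and necessary addition. As you observe, $t_3=abbabaab$ has exactly $2k$ distinct factors for $k=1,2,3$ and fewer than $2k$ for $k\ge 4$, so Proposition~\ref{prop:kfactors} alone cannot exclude a size-$2$ attractor here; the paper's one-line hint does not make this visible. Your direct occurrence argument (pinning the attractor to $\{3,6\}$ via the unique occurrences of $bb$, $aa$, $bab$, $aba$, and then showing $ab$ is uncovered) is clean and closes the case. Nothing to fix.
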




The following proposition provides the recursive structure of the $n$-th Thue-Morse words by using the rules of a context-free grammar.

\begin{proposition}\label{pr-ThMo}
The $n$-th Thue-Morse word is obtained by the following grammar:
$$\{A_0\to a, B_0 \to  b\}\bigcup_{i=1}^{n-1}\{ A_i \to A_{i-1} B_{i-1}, B_i \to B_{i-1} A_{i-1}\}\bigcup\{A_n\to A_{n-1}B_{n-1}\}$$
by taking as axiom the non terminal symbol $A_n$.
\end{proposition}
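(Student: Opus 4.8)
The plan is to prove by induction on $i$ that the grammar derives exactly the words $\varphi^i(a)$ from $A_i$ and $\varphi^i(b)$ from $B_i$, for all $0 \le i \le n-1$, and then to verify that the extra rule $A_n \to A_{n-1}B_{n-1}$ produces $\varphi^n(a)$. Once this is established, the claim follows since $t_n = \varphi^n(a)$ by definition, and $A_n$ is the axiom.

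First I would set up the base case $i = 0$: the rules $A_0 \to a$ and $B_0 \to b$ give $A_0 \Rightarrow a = \varphi^0(a)$ and $B_0 \Rightarrow b = \varphi^0(b)$. For the inductive step, assume that from $A_{i-1}$ (resp. $B_{i-1}$) the grammar derives precisely $\varphi^{i-1}(a)$ (resp. $\varphi^{i-1}(b)$), where $1 \le i \le n-1$. The rule $A_i \to A_{i-1}B_{i-1}$ then yields the word $\varphi^{i-1}(a)\varphi^{i-1}(b)$, and since $\varphi$ is a morphism we have $\varphi^{i-1}(a)\varphi^{i-1}(b) = \varphi^{i-1}(ab) = \varphi^{i-1}(\varphi(a)) = \varphi^{i}(a)$. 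Similarly $B_i \to B_{i-1}A_{i-1}$ produces $\varphi^{i-1}(b)\varphi^{i-1}(a) = \varphi^{i-1}(ba) = \varphi^{i-1}(\varphi(b)) = \varphi^{i}(b)$. This closes the induction. Finally, applying the rule $A_n \to A_{n-1}B_{n-1}$ and using the $i = n-1$ case gives $A_n \Rightarrow \varphi^{n-1}(a)\varphi^{n-1}(b) = \varphi^{n-1}(ab) = \varphi^{n}(a) = t_n$.

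There is essentially no hard step here: the only subtlety worth spelling out is why the rule for $A_n$ is asymmetric (it omits a companion rule $B_n \to B_{n-1}A_{n-1}$), namely because we only need the single word $t_n = \varphi^n(a)$ and the non-terminal $A_n$ serves solely as the axiom, so no $B_n$ is required. One should also note that each non-terminal has a unique production (except that this grammar is deterministic in the sense that every left-hand side appears exactly once), so the derived language from each $A_i$, $B_i$ is a single word, which is what makes the inductive identification meaningful. The argument is thus a routine unfolding of the morphism $\varphi$ and I do not anticipate any genuine obstacle.
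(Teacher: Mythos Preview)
Your proof is correct; it is the natural inductive unfolding of the definition of $\varphi$ and matches exactly the argument one would expect. The paper itself states this proposition without proof, treating it as an immediate observation, so there is nothing further to compare.
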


The grammar described in Proposition \ref{pr-ThMo} contains $2n+1$ rules for the Thue Morse word of length $2^n$. Therefore, by using a result proved in \cite{KempaP18}, reported here as Theorem \ref{th-PrezzaGr}, it is possible to construct a string attractor for $t_n$ having size $2n+1$.

In the last part of this section we exhibit a string attractor $\Gamma_n$ for $t_n$ of size $n$. Our conjecture is that $\gamma^*(t_n)=n$.

\begin{theorem}\label{th:string_attractorTM}
A string attractor of the $n$-th Thue Morse word, with $n\geq 3$ is $$\Gamma_n=\{2^{n-1}+1\}\bigcup_{i=2}^n\{3\cdot 2^{i-2}\}$$
\end{theorem}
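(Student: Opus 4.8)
The plan is to prove that $\Gamma_n = \{2^{n-1}+1\} \cup \bigcup_{i=2}^{n}\{3\cdot 2^{i-2}\}$ is a string attractor for $t_n$ by induction on $n$, exploiting the self-similar structure $t_n = t_{n-1}\,\overline{t_{n-1}}$, where $\overline{t_{n-1}} = \varphi^{n-1}(b) = B_{n-1}$ in the notation of Proposition~\ref{pr-ThMo}. First I would record the base case $n=3$ by direct inspection: $t_3 = abbabaab$ and $\Gamma_3 = \{5, 3, 6\}$, and one checks that every factor of $t_3$ (there are few of them) has an occurrence crossing one of positions $3,5,6$. The key observation driving the induction is that the positions $\{3\cdot 2^{i-2} : 2 \le i \le n-1\}$ of $\Gamma_n$ are exactly the positions of $\Gamma_{n-1}$ minus its ``top'' element $2^{n-2}+1$; meanwhile the two ``new'' positions $2^{n-2}+1$ (old top of $\Gamma_{n-1}$) and $3\cdot 2^{n-2}$ and $2^{n-1}+1$ sit around the midpoint $2^{n-1}$ of $t_n$. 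So $\Gamma_n$ restricted to the first half $t_{n-1}$ is $\Gamma_{n-1}$, and the remaining two positions $3\cdot 2^{n-2}$ and $2^{n-1}+1$ straddle the boundary between the two halves and lie in the ``first quarter'' of the second half $\overline{t_{n-1}}$.

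The core of the argument is a case analysis on an arbitrary factor $u$ of $t_n$ according to where its occurrences can be placed. If $u$ occurs entirely within the first half $t_{n-1}$, then by the inductive hypothesis it has an occurrence crossing $\Gamma_{n-1} \subseteq \Gamma_n$, and we are done. Symmetrically, if $u$ occurs entirely within the second half $\overline{t_{n-1}}$: here I would use that $\overline{t_{n-1}}$ is obtained from $t_{n-1}$ by swapping $a \leftrightarrow b$, that this swap is exactly the reversal-type symmetry of Thue--Morse words, and argue that an attractor for $t_{n-1}$ translates to an attractor for $\overline{t_{n-1}}$ at the corresponding shifted positions; the point $2^{n-1}+1$ together with $3\cdot 2^{n-2}$ should play, inside $\overline{t_{n-1}}$, the role of the first two points of $\Gamma_{n-1}$ after shifting by $2^{n-1}$ — this requires checking that $\{1, 2\cdot 2^{n-2} - 2^{n-1}+ \ldots\}$ lines up, i.e. that $3\cdot2^{n-2}-2^{n-1} = 2^{n-2}$ and $2^{n-1}+1-2^{n-1}=1$ match the first coordinates $3 = 3\cdot 2^{0}$ shifted appropriately at the smallest scales. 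This is the delicate bookkeeping step. Finally, if $u$ crosses the midpoint, i.e. $u$ has an occurrence $t_n[i,j]$ with $i \le 2^{n-1} < j$, then that occurrence already crosses position $2^{n-1}+1 \in \Gamma_n$ (since $2^{n-1}+1$ is the first position of the second half and $i \le 2^{n-1} \le 2^{n-1}$, $j \ge 2^{n-1}+1$), unless $i = j = $ a single letter — but single letters $a$ and $b$ both occur crossing position $3$ or so and are handled in the base-case-style check. Actually one must be careful: a factor crossing the midpoint need not have $i \le 2^{n-1}+1 \le j$; rewriting, $u = t_n[i,j]$ with $i \le 2^{n-1}$ and $j \ge 2^{n-1}+1$ does contain position $2^{n-1}+1$ iff $j \ge 2^{n-1}+1$, which holds, and $i \le 2^{n-1}+1$, which holds; so indeed every such factor crosses $2^{n-1}+1$.

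The main obstacle I anticipate is the second case — factors lying wholly inside the second half $\overline{t_{n-1}}$. The naive hope ``an attractor for $t_{n-1}$ becomes an attractor for $\overline{t_{n-1}}$ shifted by $2^{n-1}$'' would demand the shifted set $\{2^{n-1}+3\cdot 2^{i-2}\}$, which is \emph{not} what $\Gamma_n$ provides beyond the point $2^{n-1}+1$; inside the second half, $\Gamma_n$ contributes essentially only the single position $2^{n-1}+1$ (plus $3\cdot 2^{n-2} = 2^{n-1}+2^{n-2}$, which is at the $2^{n-2}{+}1$-st position of the second half). So the real content is that $t_{n-1}$ — equivalently $\overline{t_{n-1}}$ — already has a size-$2$ string attractor concentrated near one end, namely at (shifted) positions $1$ and $2^{n-2}+1$; and more generally that the set $\{1\}\cup\{3\cdot 2^{i-2}: 2\le i\le n-2\}\cup\{2^{n-2}+1\}$, read inside $\overline{t_{n-1}}$, is a string attractor. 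This is precisely a statement of the same shape as the theorem for index $n-1$ but with the top element relocated, so I would strengthen the induction hypothesis to a two-parameter statement: for each $n$ and each ``cut level'', the set obtained from $\Gamma_n$ by sliding the top element is still an attractor; then all three cases close uniformly. Verifying that this strengthened hypothesis is self-propagating, and pinning down exactly which relocations are permitted, is where the bulk of the careful work lies.
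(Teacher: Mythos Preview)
Your crossing-midpoint case is fine and essentially matches the paper's treatment of ``new'' factors. The gap is in the other two cases. First, your assertion that ``$\Gamma_n$ restricted to the first half $t_{n-1}$ is $\Gamma_{n-1}$'' is false: you yourself observe two lines earlier that $2^{n-2}+1\in\Gamma_{n-1}$ is \emph{not} in $\Gamma_n$ for $n\ge 4$, so the inductive hypothesis does not cover factors in the first half whose only $\Gamma_{n-1}$-crossing occurrence uses position $2^{n-2}+1$. Second, as you flag, factors lying wholly in the second half $\overline{t_{n-1}}$ are not handled: $\Gamma_n$ contributes only the two positions $2^{n-1}+1$ and $3\cdot 2^{n-2}$ there, and these do \emph{not} form an attractor for $\overline{t_{n-1}}$ (indeed $\overline{t_{n-1}}$ has the same factors as $t_{n-1}$, for which you need $n-1$ positions). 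Your proposed remedy---a two-parameter strengthened induction allowing the top element to slide---is only a sketch; making it precise and self-propagating is exactly the missing content.

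The paper sidesteps this by a different case split. Instead of first~half\,/\,second~half\,/\,crossing, it splits on whether the factor already appears in the shorter word $t_n$ or is new to $t_{n+1}$. New factors are forced to cross the midpoint and hence the added position $2^{n}+1$ (your third case). For old factors the paper invokes a separate lemma (Lemma~\ref{lem:trequarti}): every factor of $t_n$ has, \emph{in $t_{n+1}$}, an occurrence crossing one of the positions $\{3\cdot 2^{i-2}:2\le i\le n+1\}$, i.e.\ the part of $\Gamma_{n+1}$ obtained after the MOVE but before the ADD. This lemma is precisely the clean form of what your strengthened induction would have to prove, but isolated and established on its own. Your geometric decomposition cannot reach the conclusion without something of this strength, because an old factor may occur only in the second half of $t_{n+1}$, far from any attractor point; the lemma is what guarantees it also has \emph{another} occurrence somewhere in $t_{n+1}$ crossing one of the $3\cdot 2^{i-2}$ positions.
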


\begin{lemma}\label{lem:trequarti}
Let $x$ be a factor of $t_n=\varphi^n(a)$, with $n\geq3$. Then, $t_{n+1}$ admits an occurrence of $x$ crossing a position in the set $\bigcup_{i=2}^{n+1}\{3\cdot 2^{n+1-i}\}$. 
\end{lemma}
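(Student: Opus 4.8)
The plan is to prove the lemma by induction on $n$, exploiting the self-similar factorization of the Thue--Morse words. Write $\overline{u}$ for the word obtained from $u$ by exchanging $a$ and $b$; since $\varphi$ commutes with this exchange, $t_{k}=\varphi^{k}(a)=t_{k-1}\overline{t_{k-1}}$ and $\overline{t_k}=\varphi^{k}(b)=\overline{t_{k-1}}\,t_{k-1}$ for all $k\ge 1$, so that $t_{n+1}=t_n\overline{t_n}=t_{n-1}\,\overline{t_{n-1}}\,\overline{t_{n-1}}\,t_{n-1}$ is a concatenation of four blocks of length $2^{n-1}$, and (for $n\ge4$) $t_{n+1}=\varphi^{n-2}(t_3)=t_{n-2}\overline{t_{n-2}}\,\overline{t_{n-2}}t_{n-2}\,\overline{t_{n-2}}t_{n-2}\,t_{n-2}\overline{t_{n-2}}$ is a concatenation of eight blocks of length $2^{n-2}$. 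The target set $\bigcup_{i=2}^{n+1}\{3\cdot 2^{n+1-i}\}$ equals $\{3,6,\dots,3\cdot 2^{n-2},3\cdot 2^{n-1}\}$: its top element $3\cdot 2^{n-1}=2^n+2^{n-1}$ is the centre of the suffix $\overline{t_n}$, the element $3\cdot 2^{n-2}$ is the centre of the second block $\overline{t_{n-1}}$, and the smaller elements lie inside the prefix $t_n$ and form exactly the target set $P_n:=\bigcup_{i=2}^{n}\{3\cdot 2^{n-i}\}$ of level $n-1$ (which is therefore contained in the level-$n$ target set). The base case $n=3$ is a finite verification inside $t_4=abbabaabbaababba$.

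For the inductive step ($n\ge4$), let $x$ be a factor of $t_n=t_{n-1}\overline{t_{n-1}}$, and distinguish cases by the location of $x$. (i) If $x$ is a factor of the prefix $t_{n-1}$, the inductive hypothesis gives an occurrence of $x$ in $t_n$ meeting $P_n$, which --- since $t_n$ is a prefix of $t_{n+1}$ --- is an occurrence in $t_{n+1}$ meeting the target set. (ii) If $x$ is a factor of $\overline{t_{n-1}}=\overline{t_{n-2}}\,t_{n-2}$: both halves $\overline{t_{n-2}}$ and $t_{n-2}$ are factors of $t_{n-1}=t_{n-2}\overline{t_{n-2}}$, so if $x$ lies in one half we are back in case~(i); otherwise $x$ straddles the centre of $\overline{t_{n-1}}$, and, as the second block of $t_{n+1}$ is a copy of $\overline{t_{n-1}}$ with centre at position $3\cdot 2^{n-2}$, that occurrence of $x$ meets the target set. (iii) Otherwise $x$ straddles the centre of $t_n$; write $x=uv$ with $u$ a nonempty suffix of $t_{n-1}$ and $v$ a nonempty prefix of $\overline{t_{n-1}}$ (if $u$ or $v$ is empty we fall back to case~(i) or~(ii)).

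Case~(iii) is the core. The occurrence of $x$ about the centre of $t_n$ inside $t_{n+1}$ occupies positions $[\,2^{n-1}-|u|+1,\ 2^{n-1}+|v|\,]$, which already meets $3\cdot 2^{n-2}$ when $|v|\ge 2^{n-2}$ and meets $3\cdot 2^{n-3}$ when $|u|\ge 2^{n-3}+1$; hence we may assume $|u|\le 2^{n-3}$, $|v|\le 2^{n-2}-1$, so that $u$ is a suffix of $\overline{t_{n-2}}$, $v$ a proper prefix of $\overline{t_{n-2}}$, and $|x|<2^{n-2}$: thus $x$ is a factor of $t_n$ straddling the central junction of the sub-block $\overline{t_{n-2}}\,\overline{t_{n-2}}$. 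I would then establish the dichotomy: either $x$ is a factor of $t_{n-1}$ (and we return to case~(i)), or $x$ is also a central factor of the block pair $t_{n-2}\,t_{n-2}$, which the eight-block form of $t_{n+1}$ places with its junction exactly at position $3\cdot 2^{n-1}$, so that $x$ occurs straddling the centre of the suffix $\overline{t_n}$ and again meets the target set. The case $n=4$, where $\overline{t_{n-2}}=baab$ and $t_{n-2}=abba$, illustrates this: the short central factors of $t_4$ not already factors of $t_3$ --- such as $bbaa$ and $abbaa$ --- are precisely the central factors of $abba\,abba$, and they occur in $t_5$ crossing position $24=3\cdot 2^{3}$.

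The step I expect to be the genuine obstacle is exactly this last dichotomy: showing that a short central factor of $t_n$ that escapes every target position inside $t_n$ must re-appear around the centre of $\overline{t_n}$. I would attack it by descending through the self-similar block structure, using the overlap-freeness of the Thue--Morse word to pin down the occurrences of the long sub-blocks, together with the parity fact that $t_k$ is a palindrome for even $k$ and satisfies $\widetilde{t_k}=\overline{t_k}$ for odd $k$, which makes the left neighbourhood of the centre of $t_n$ a mirror image of the right neighbourhood of the centre of $\overline{t_n}$; a one-line symmetry argument will not suffice, because the set $\{3\cdot 2^{j}\}$ is invariant under neither the reversal nor the letter-exchange of $t_{n+1}$.
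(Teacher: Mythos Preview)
The paper states this lemma without proof; the only argument given in that section is the inductive proof of Theorem~\ref{th:string_attractorTM}, which \emph{uses} the lemma. So there is no ``paper proof'' to compare against, and your sketch is to be judged on its own.

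Your induction scheme is the natural one, and cases~(i) and~(ii) are correct. In case~(iii) there is a small numerical slip: from $|u|\le 2^{n-3}$ and $|v|\le 2^{n-2}-1$ you only get $|x|\le 3\cdot 2^{n-3}-1$, not $|x|<2^{n-2}$. This is harmless, since what you actually need (and have) is that $u$ is a suffix of $t_{n-3}$ and $v$ a proper prefix of $\overline{t_{n-2}}=\overline{t_{n-3}}\,t_{n-3}$.

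The dichotomy you flag as the obstacle is in fact true and can be proved directly, without overlap-freeness or palindrome parity, by one further split on $|v|$:
\begin{itemize}
\item If $|v|\le 2^{n-3}$, then $v$ is a prefix of $\overline{t_{n-3}}$, so $x=uv$ is a factor of $t_{n-3}\overline{t_{n-3}}=t_{n-2}$, hence of $t_{n-1}$; this is case~(i).
\item If $|v|>2^{n-3}$, write $v=\overline{t_{n-3}}\,v'$ with $v'$ a nonempty proper prefix of $t_{n-3}$. Then $x=u\,\overline{t_{n-3}}\,v'$ with $u$ a suffix of $t_{n-3}$. In the $16$-block decomposition $t_{n+1}=\varphi^{n-3}(t_4)$ (each letter of $t_4=abbabaabbaababba$ expanding to $t_{n-3}$ or $\overline{t_{n-3}}$), blocks $11,12,13$ read $t_{n-3}\,\overline{t_{n-3}}\,t_{n-3}$, and the junction between blocks $12$ and $13$ is exactly position $12\cdot 2^{n-3}=3\cdot 2^{n-1}$. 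Placing $x$ with its middle block on block~$12$ gives an occurrence at positions $[11\cdot 2^{n-3}-|u|+1,\ 12\cdot 2^{n-3}+|v'|]$, which contains $3\cdot 2^{n-1}$ because $|v'|\ge 1$. This is precisely your ``$x$ is a central factor of $t_{n-2}t_{n-2}$'' branch (since blocks $11$--$14$ form $t_{n-2}t_{n-2}$ and the centre of that pair is $3\cdot 2^{n-1}$).
\end{itemize}
So the proof closes with a one-level deeper block decomposition rather than the symmetry/overlap-freeness line you were anticipating; the tools you listed would work harder than necessary here.
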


\begin{figure}[t!]
    \centering
    \includegraphics[width=\textwidth]{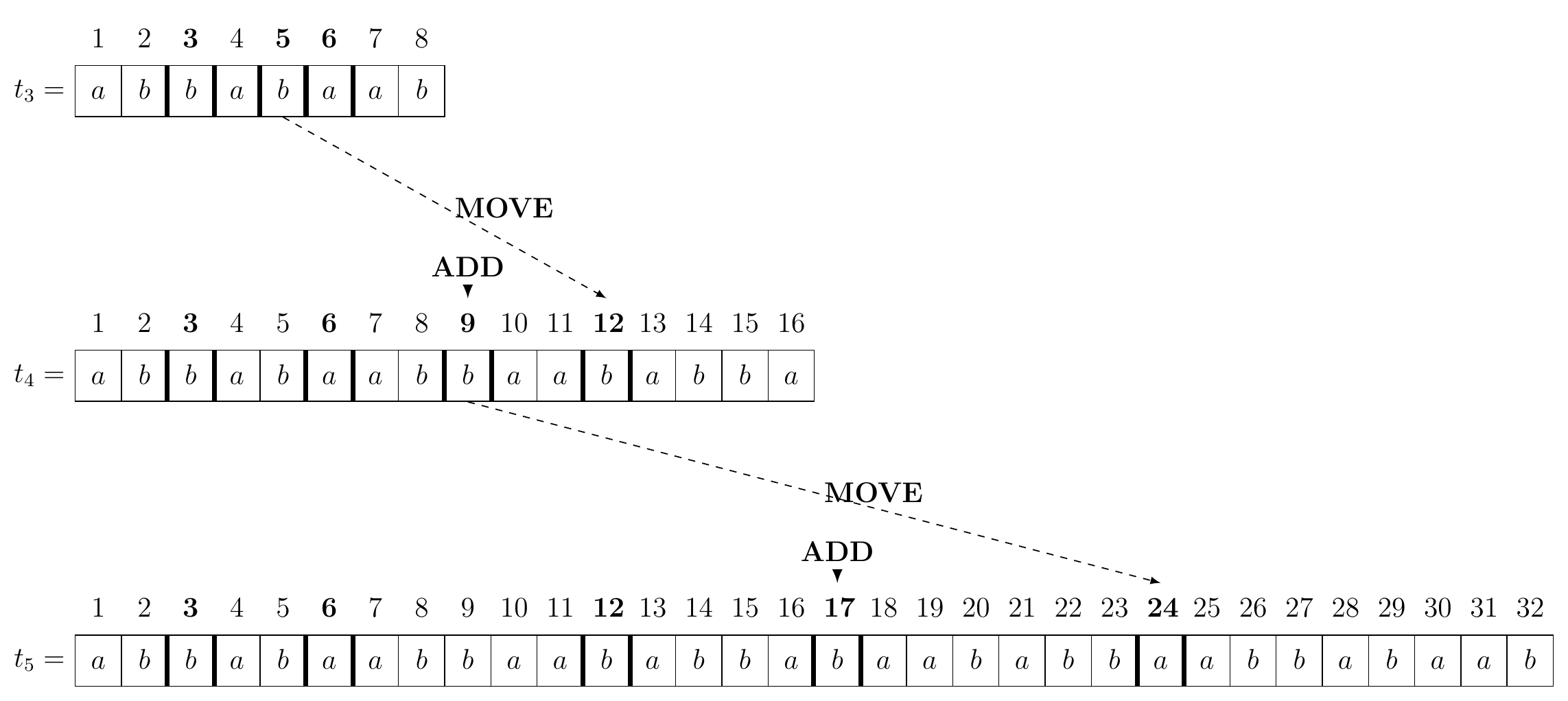}
    \caption{String attractors $\Gamma_n$ for the word $t_n=\varphi^n(a)$, with $n=3,4,5$ (the positions in $\Gamma_n$ are in bold), i.e. $\Gamma_3=\{3,5,6\}$, $\Gamma_4=\{3,6,9,12\}$, $\Gamma_5=\{3,6,12,17,24\}$. Note that in $\Gamma_4$ the position $9$ is obtained by {\sc ADD($9$)} operation, the position $12$ is obtained by the operation {\sc MOVE($5,3\cdot 4$)}. In $\Gamma_5$ the position $17$ is obtained by {\sc ADD($17$)} operation, the position $24$ is obtained by the operation {\sc MOVE($9,3\cdot 8$)}.}
    \label{fig:addmove}
\end{figure}


\begin{proof}[Theorem \ref{th:string_attractorTM}]
The thesis is proved by induction on $n$. If $n=3$, it is easy to check (see Fig. \ref{fig:addmove}) that $\Gamma_3=\{3,5,6\}$ is a string attractor for $t_3$. 

Let us suppose that $\Gamma_{n}$ is a string attractor for $t_{n}$. We show that a string attractor for $t_{n+1}$ can be obtained by applying to $\Gamma_n$ the following two operations:
\begin{itemize}
    \item {\sc ADD($2^{n}+1$)}, that adds the new position $2^{n}+1$
    \item {\sc MOVE($2^{n-1}+1$,$3\cdot 2^{n-1}$)} that replaces the position $2^{n-1}+1$ with $3\cdot 2^{n-1}$.
\end{itemize}
Such operations are described, for $n=3,4,5$, in Fig.  \ref{fig:addmove}.
Let $x$ be a factor of $t_{n+1}$. If $x$ is also a factor of $t_n$, then by Lemma \ref{lem:trequarti} has at least an occurrence crossing a position in the set 
$\Delta=\bigcup_{i=2}^{n+1}\{3\cdot 2^{n+1-i}\}$. We can suppose that $x$ is factor of $t_{n+1}$ that does not appear in $t_n$ and that does not cross any position in $\Delta$. It means that $x$ has to be factor of $u_n v_{n+1}$. In particular, such a factor exists. In fact, let $a$ and $b$ be the first and the last character of $u_n$, then $av_nb$ has only one occurrence in $t_{n+1}$. If follows from the fact that $t_{n+1}=u_n v_n v_n u_n v_n u_n u_n v_n$ and that the Thue-Morse words has no overlapping factors. Moreover, the occurrence of $av_n b$ crosses the position $2^{n}+1$.\qed
\end{proof}

\section{Attractors in de Brujin words}

A {\em de Bruijn} sequence (or words) $B$ of order $k$ on an alphabet $\Sigma$ of size $\sigma$, is a circular sequence in which every possible length-$k$ string on $\Sigma$ occurs exactly once as a substring.

De Brujin words are widely studied in combinatorics on words, and all of them can be constructed by considering all the Eulerian walks on de Brujin graphs. All the de Brujin sequences of order $k$ over an alphabet of size $\sigma$ have length $\sigma^k$.
For instance the (circular) word $w=aaaababbbbabaabb$ is a de Brujin word of order 4 over the alphabet $\{a,b\}$. In fact one can verify that all strings of length 4 over $\{a,b\}$ appear as factor of $w$ just once.

Since we are here interested to linear and not to cyclic words, it is easy to verify that in order to have linear words containing all the $k$-length factors exactly once, it is sufficient to consider any linearization of the circular de Brujin word of order $k$ (that is, we cut the circular word in any position to get a linear word) and concatenate it with a word equal to its own prefix of length $k-1$. Therefore its length is $\sigma^k+k-1$.  We call such words {\em linear de Brujin sequences (or words)}.  For instance the linear de Brujin word corresponding to the circular one in the above example is the word  $w'=aaaababbbbabaabbaaa$.
Remark that the length of $w'$ is $2^k+k-1$.





In \cite{LZ1976} the following two theorems are proved.

\begin{theorem}\label{th-dBupper}
The number of phrases $c(n)$ in a LZ parsing of a sequence of length $n$ over an alphabet of size $\sigma$ satisfies:
$$c(n)\leq \frac{n}{(1-\epsilon_n)\log n}$$
where $\epsilon_n=2\frac{1+\log(\log (\sigma n))}{\log n}$.
\end{theorem}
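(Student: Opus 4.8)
The plan is to reproduce the classical counting argument of \cite{LZ1976}. I will prove the bound for any factorization $w=p_1\cdots p_c$ into \emph{pairwise distinct} phrases; this is exactly the situation for the quantity $c(\cdot)$ of \cite{LZ1976} (the number of components of the exhaustive history), because a component equal to an earlier one would already occur as a factor before its own last position, contradicting the way it is produced. The transfer to the other ``LZ parsings'' appearing in the paper is standard and I will not dwell on it. Throughout I take logarithms to base $\sigma$, so that $\log(\sigma n)=1+\log n$ and hence $\epsilon_n=2\,\tfrac{1+\log(1+\log n)}{\log n}\to 0$.

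The heart of the proof is a two-line count. Over $\Sigma$ there are exactly $N(\ell):=\sigma+\sigma^{2}+\cdots+\sigma^{\ell}=\tfrac{\sigma^{\ell+1}-\sigma}{\sigma-1}<\sigma^{\ell+1}$ distinct nonempty words of length at most $\ell$, so among the $c=c(n)$ distinct phrases at most $N(\ell)$ can have length $\le\ell$; on the other hand, the phrases are disjoint factors covering $w$, so at most $n/(\ell+1)$ of them can have length $>\ell$. Adding the two bounds gives, for \emph{every} integer $\ell\ge1$,
\[
c(n)\ \le\ N(\ell)+\frac{n}{\ell+1}\ <\ \sigma^{\ell+1}+\frac{n}{\ell+1}.
\]

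Next I would optimise over $\ell$. Choosing $\ell+1=\bigl\lfloor\log n-2\log\log n\bigr\rfloor$ makes $\sigma^{\ell+1}\le n/(\log n)^{2}$ while keeping $\ell+1\ge\log n-2\log\log n-1$, and substituting yields
\[
c(n)\ \le\ \frac{n}{(\log n)^{2}}+\frac{n}{\log n-2\log\log n-1}\ =\ \frac{n}{\log n}\left(\frac{1}{\log n}+\frac{1}{1-\tfrac{2\log\log n+1}{\log n}}\right).
\]
A direct estimate shows the parenthesised factor is at most $\bigl(1-2\,\tfrac{1+\log\log n}{\log n}\bigr)^{-1}$, so that $c(n)\le\frac{n}{\log n}\bigl(1-2\,\tfrac{1+\log\log n}{\log n}\bigr)^{-1}$; since $\log\log n\le\log(1+\log n)=\log(\log(\sigma n))$, the right-hand side is at most $\frac{n}{(1-\epsilon_n)\log n}$, which is the claim (and in fact a shade stronger than it).

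The only genuine content is the counting step, and it rests entirely on the phrases being distinct; everything after it is elementary optimisation, so I expect no real obstacle. The two places that need a little care are (i) verifying distinctness for whichever precise notion of ``LZ parsing'' one adopts --- for the exhaustive history of \cite{LZ1976} it holds as noted above, and for the LZ\,78 factorization by construction --- and (ii) tracking the floor in the choice of $\ell$ and the lower-order terms closely enough to land on the exact $\epsilon_n$ rather than on a generic $o(1)$ error.
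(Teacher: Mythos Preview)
The paper does not prove this theorem at all: it merely quotes it from \cite{LZ1976} (``In \cite{LZ1976} the following two theorems are proved''), so there is no in-paper argument to compare against. Your proposal is precisely the classical Lempel--Ziv counting argument from that reference, and the computation you give is correct; in particular your check that $a+\tfrac{1}{1-b}\le\tfrac{1}{1-b-a}$ (with $a=1/\log n$, $b=(2\log\log n+1)/\log n$) goes through since $(1-b-a)(1-b)<1$, and the passage from $\log\log n$ to $\log\log(\sigma n)$ only weakens the bound.

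One genuine caveat worth flagging explicitly: the LZ factorization \emph{as defined in this paper} (the LZ77-style ``longest previous factor'' parsing in the Preliminaries) does \emph{not} in general produce pairwise distinct phrases --- e.g.\ $w=aaa$ parses as $a\cdot a\cdot a$. Your distinctness hypothesis holds for the exhaustive history of \cite{LZ1976} (and for LZ78), which is exactly the setting of the cited theorem, so your proof matches the source; but the ``standard transfer'' you allude to for LZ77 is not entirely free and would need a short extra argument (e.g.\ bounding repeated phrases, or comparing the two parsings) if one insisted on the paper's own definition.
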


\begin{theorem}\label{th-dBLower}
Let $B$ be a de Brujin sequence of order $k$ and length $n+k-1$ over an alphabet of size $\sigma$ ($n=\sigma^k$). Then $$c(B)\geq \frac{n}{\log n}$$
\end{theorem}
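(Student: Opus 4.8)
The plan is to show that, in the LZ factorization $B = p_1 p_2 \cdots p_c$ of a linear de Bruijn word $B$ of order $k$, no phrase can be longer than $k$, and then simply divide the length of $B$ by this bound.

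The combinatorial core is the claim that $B$ has no repeated factor of length $>k$. By definition $B$ contains every word of length $k$ over $\Sigma$ exactly once as a factor, so its $n=\sigma^k$ factors of length $k$ are pairwise distinct. It follows that its factors of length $k+1$ are pairwise distinct too: two of them occurring at distinct positions have distinct length-$k$ prefixes (those prefixes being length-$k$ factors at distinct positions), hence are distinct words. Therefore no factor of length $\ge k+1$ can have two occurrences, i.e. the longest repeated factor $r$ of $B$ satisfies $r\le k$. I would take some care here with the passage from the circular de Bruijn sequence to its linearization $B=L\,\operatorname{pref}_{k-1}(L)$, but the counting only uses the stated property ``each length-$k$ word occurs exactly once in $B$'', so nothing delicate is needed.

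Next, by the definition of the LZ factorization, each phrase $p_i$ is either the leftmost occurrence of a single letter, so $|p_i|=1\le k$, or the longest prefix of $p_i\cdots p_c$ that already occurs as a factor of $p_1\cdots p_{i-1}$. In the latter case $p_i$ occurs inside $p_1\cdots p_{i-1}$ and also at its own position $|p_1\cdots p_{i-1}|+1$; these are two distinct occurrences of $p_i$ in $B$, so $p_i$ is a repeated factor of $B$ and hence $|p_i|\le r\le k$ by the previous paragraph. Thus every phrase has length at most $k$.

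Summing the phrase lengths gives $n+k-1=|B|=\sum_{i=1}^{c}|p_i|\le ck$, so $c(B)\ge (n+k-1)/k\ge n/k$. Since $n=\sigma^k$, with the logarithm to base $\sigma$ we have $\log n=k$ (matching Theorem~\ref{th-dBupper}; the inequality also holds for base $2$, since $\sigma\ge 2$), and therefore $c(B)\ge n/\log n$, as claimed. I expect the only real obstacle to be the first step --- pinning down that a de Bruijn word of order $k$ has no repeated factor longer than $k$ --- together with checking the degenerate cases of the phrase definition (the first and last phrases, single-letter phrases); both are routine once set up. One could instead route through Proposition~\ref{prop:repeatedfactor} and Theorem~\ref{L_KePrLZ} to get $c(B)\ge\gamma^*(B)\ge (n-1)/(k+1)$, but this asymptotically equivalent estimate has the wrong constant to reach the stated inequality, so the direct phrase-length argument is the one to use.
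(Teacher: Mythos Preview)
Your argument is correct. The paper does not actually prove this theorem: it is quoted verbatim from \cite{LZ1976} (Lempel and Ziv's original paper) and used as a black box, so there is no in-paper proof to compare against. Your direct phrase-length bound is the standard route and matches the spirit of the original source.

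One small sharpening you may want to record: since every length-$k$ word occurs \emph{exactly once} in $B$, already the factors of length $k$ are pairwise distinct, so the longest repeated factor satisfies $r\le k-1$ (not merely $r\le k$). This does not change your conclusion, but it does rescue the alternative route you dismissed: plugging $r=k-1$ and $|B|=n+k-1$ into Proposition~\ref{prop:repeatedfactor} gives $\gamma^*(B)\ge \frac{(n+k-1)-(k-1)}{(k-1)+1}=\frac{n}{k}$, and then $c(B)\ge\gamma^*(B)\ge n/k=n/\log_\sigma n$ via Theorem~\ref{L_KePrLZ}. So both approaches reach the stated bound with the right constant once $r$ is pinned down precisely.
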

By combining Theorem \ref{th-dBupper}, \ref{th-dBLower} and Lemma \ref{L_KePrLZ}, we get both upper and lower bounds for a smallest string attractor for de Brujin sequences.

\begin{proposition}
Let $B$ be a de Brujin sequence of order $k$ and length $n+k-1$ over an alphabet of size $\sigma$ ($n=\sigma^k$). Then the cardinality $\gamma^*$ of a smallest string attractor for $B$ satisfies:
$$\frac{n}{\log n}\leq \gamma^* \leq \frac{n}{(1-\epsilon_n) \log n}+1$$
where $\epsilon_n=2\frac{1+\log(\log (\sigma n))}{\log n}$.
\end{proposition}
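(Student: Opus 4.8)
The plan is to derive the claimed two-sided bound by simply feeding the two Lempel--Ziv results, Theorem~\ref{th-dBupper} and Theorem~\ref{th-dBLower}, into Theorem~\ref{L_KePrLZ}. Recall that Theorem~\ref{L_KePrLZ} says that for any word $w$ there is a string attractor of size equal to the number $c(w)$ of phrases in the LZ parsing of $w$; hence $\gamma^*(w)\le c(w)$. On the other hand, Proposition~\ref{prop:kfactors} gives a generic lower bound on $\gamma^*$ in terms of the number of distinct factors of a fixed length, and this is exactly the mechanism hidden inside Theorem~\ref{th-dBLower}. So the proof is essentially a two-line chaining of already-stated inequalities; the only care needed is to make sure the length parameter is handled consistently, since the de Bruijn word $B$ of order $k$ has length $n+k-1$ with $n=\sigma^k$, and the two cited theorems are phrased with slightly different length conventions.

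First I would fix the notation: let $B$ be a linear de Bruijn word of order $k$ over $\Sigma$, $|\Sigma|=\sigma$, $n=\sigma^k$, so $|B|=n+k-1$. For the upper bound, apply Theorem~\ref{th-dBupper} to $B$ with length $n+k-1$; since $k-1$ is negligible compared to $n$ (indeed $k=\log_\sigma n$), one gets $c(B)\le \frac{n+k-1}{(1-\epsilon)\log(n+k-1)}$, which after absorbing the lower-order term and adjusting $\epsilon$ is bounded by $\frac{n}{(1-\epsilon_n)\log n}$ for $\epsilon_n=2\frac{1+\log\log(\sigma n)}{\log n}$; then Theorem~\ref{L_KePrLZ} gives $\gamma^*(B)\le c(B)\le \frac{n}{(1-\epsilon_n)\log n}$. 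The extra $+1$ in the statement is a safety margin coming precisely from this truncation of the $k-1$ term (or, if one prefers, from bounding $c(B)$ by $c$ of the cyclic part plus one for the appended prefix of length $k-1$); I would phrase it so that the inequality is clearly valid rather than optimal.

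For the lower bound, apply Theorem~\ref{th-dBLower} directly: $c(B)\ge \frac{n}{\log n}$. But Theorem~\ref{L_KePrLZ} only gives $\gamma^*\le c(B)$, which is the wrong direction, so this is the step that needs a genuine argument rather than a citation. The right tool is Proposition~\ref{prop:kfactors}: every de Bruijn word of order $k$ contains all $\sigma^k=n$ distinct factors of length $k$, so by that proposition any string attractor $\Gamma$ satisfies $|\Gamma|\cdot k\ge n$, i.e. $\gamma^*(B)\ge n/k = n/\log_\sigma n$; choosing base-$\sigma$ logarithms (or rescaling) gives $\gamma^*(B)\ge n/\log n$ with the convention used throughout the section. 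Combining the two bounds yields
\begin{equation*}
\frac{n}{\log n}\le \gamma^*(B)\le \frac{n}{(1-\epsilon_n)\log n}+1,
\end{equation*}
as claimed.

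The main obstacle is not conceptual but bookkeeping: reconciling the length conventions ($n$ versus $n+k-1$, and which base of logarithm is meant so that Theorem~\ref{th-dBLower} and Proposition~\ref{prop:kfactors} agree), and justifying that the passage from the cyclic de Bruijn word to its linearization plus a length-$(k-1)$ prefix costs at most the additive $1$ in the string-attractor size. Once those conventions are pinned down, the rest is the immediate chaining $\gamma^*\le c(B)$ (upper) and $\gamma^*\ge n/k$ via distinct $k$-factors (lower).
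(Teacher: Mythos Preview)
Your proposal is correct. For the upper bound you do exactly what the paper does: chain Theorem~\ref{L_KePrLZ} ($\gamma^*\le c(B)$) with Theorem~\ref{th-dBupper}, and your remark about the $+1$ absorbing the discrepancy between $n$ and $n+k-1$ is a fair reading of a constant the paper does not bother to justify.

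For the lower bound, you actually improve on the paper's one-line argument. The paper simply says the proposition follows ``by combining'' Theorems~\ref{th-dBupper}, \ref{th-dBLower} and~\ref{L_KePrLZ}; but, as you correctly point out, Theorem~\ref{L_KePrLZ} only gives $\gamma^*\le c(B)$, so coupling it with Theorem~\ref{th-dBLower} ($c(B)\ge n/\log n$) yields nothing about $\gamma^*$ from below. Your fix via Proposition~\ref{prop:kfactors}---a de~Bruijn word of order $k$ has all $\sigma^k=n$ factors of length $k$, hence $\gamma^*\cdot k\ge n$, i.e.\ $\gamma^*\ge n/k=n/\log_\sigma n$---is the right argument and is in fact the same counting idea that underlies Theorem~\ref{th-dBLower} in~\cite{LZ1976}. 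So the two approaches are morally the same, but yours makes the inference explicit and logically sound; the paper's phrasing hides this step behind the citation of Theorem~\ref{th-dBLower}. Your caveat about the logarithm base is also apt: the bound $n/\log n$ holds with base-$\sigma$ logarithms, matching the convention of the cited Lempel--Ziv theorems.
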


This means that $\gamma^*$ for a de Brujin word of length $n$ grows asintotically as $\frac{n}{\log n}$, corresponding to the worst case for the size of a smallest string attractor of any word in $\Sigma^*$.

Notice that the lower bound is somehow intuitively expected, since all the words of length $k$ appear only once in $B$, therefore two consecutive positions in any string attractor cannot be farthest than $k$.

For instance one can verify that for 
$w'=aaaababbbbabaabbaaa$ a smallest string attractor  is $\{4, 8, 12, 16 \}$.

\section{Conclusion and Open Problems}
In this paper we have studied the notion of string attractor from a combinatorial point of view. We have given an explicit construction of the string attractor for the infinite families of standard Sturmian words and Thue-Morse words. For standard Sturmian words, by using their combinatorial properties, the construction gives a smallest string attractor whose size is $2$. String attractors of minimum size can be also constructed for circularly balanced epistandard words. It is open the question to characterize all the words whose smallest string attractor has size equal to the cardinality of the alphabet. For Thue-Morse words, the size of the attractor is logarithmic with respect to the length of the word. We conjecture that such size is minimum and we leave it as an open problem.  
Based on the results presented in the paper two research directions could be explored. Some standard Sturmian words and Thue-Morse words are generated by morphisms. It could be interesting to find which properties of the morphism determine a smallest string attractor of constant size. Moreover, we plan to study how the distribution of the positions in the smallest string attractor is related to the combinatorial structure of the words.

Finally, the size of smallest string attractor could be used to define a new function to measure the complexity of infinite words. It could be interesting to investigate how such a measure is related with other known complexity measures, such as the factor complexity.

\section*{Acknowledgements}
S. Mantaci, G. Rosone and M. Sciortino are partially supported by the project MIUR-SIR CMACBioSeq (``Combinatorial methods for analysis and compression of biological sequences'') grant n.~RBSI146R5L.

\bibliographystyle{splncs04}
\bibliography{BWT}

\end{document}